\documentclass[pdflatex,sn-mathphys-num]{sn-jnl}

\usepackage{graphicx}%
\usepackage{multirow}%
\usepackage[intlimits,sumlimits,namelimits]{amsmath}
\usepackage{amssymb,amsfonts}%
\usepackage{amsthm}%
\usepackage{mathrsfs}%
\usepackage[title]{appendix}%
\usepackage{xcolor}%
\usepackage{textcomp}%
\usepackage{manyfoot}%
\usepackage{booktabs}%
\usepackage{algorithm}%
\usepackage{algorithmicx}%
\usepackage{algpseudocode}%
\usepackage{listings}%
\usepackage[T1]{fontenc}
\usepackage{lmodern}
\usepackage{textcomp}
\usepackage{microtype}
\usepackage{xspace}
\usepackage{eucal}
\usepackage{eqlist} 
\usepackage[nosepfour,warning,np,debug,autolanguage]{numprint}
\usepackage{acro}
\usepackage{graphicx} \graphicspath{{./Figures/}}
\usepackage{subcaption}
\usepackage{geometry} 
\DeclareGraphicsExtensions{.pdf, .jpg, .tiff}
\usepackage{tikz}
\usetikzlibrary{trees}
\usepackage{pgfplots} \pgfplotsset{compat=1.18}
\usepackage{url}
\usepackage{titlesec}

\DeclareMathOperator*{\argmin}{\operatorname{argmin}}
\DeclareMathOperator*{\argmax}{\operatorname{argmax}}

\newcommand{\D}{\mathcal{D}}
\newcommand{\RR}{\mathbb{R}}
\newcommand{\oc}{\overline{c}}
\newcommand{\ox}{\overline{x}}
\newcommand{\ocs}{\overline{c}_s}
\newcommand{\osigma}{\overline{\sigma}}

\renewcommand{\eqref}[1]{(\ref{#1})}
\newcommand{\SISDG}[0]{SI$\cdot$DG}
\newcommand{\DSISDG}[0]{SI$\cdot$DDG}

\theoremstyle{thmstyleone}%
\newtheorem{theorem}{Theorem}
\newtheorem{lemma}[theorem]{Lemma}
\newtheorem{corollary}[theorem]{Corollary}

\theoremstyle{thmstyletwo}%

\theoremstyle{thmstylethree}%
\newtheorem{definition}{Definition}%
\newtheorem{problem}[definition]{Necessary Conditions}

\setcitestyle{authoryear}

\begin{document}

\title[Article title]{Social Distancing Equilibria in Games under Conventional SI Dynamics}
\author*[1]{\fnm{Connor D.} \sur{Olson}}\email{cjo5325@psu.edu}
\equalcont{These authors contributed equally to this work.}
\author[1,2]{\fnm{Timothy C.} \sur{Reluga}}\email{timothy@reluga.org}
\equalcont{These authors contributed equally to this work.}

\affil[1]{\orgdiv{Department of Mathematics}, \orgname{Pennsylvania State University}, \city{University Park}, \postcode{16802}, \state{PA}, \country{United States of America}}

\affil[2]{\orgdiv{Huck Institute of Life Sciences}, \orgname{Pennsylvania State University}, \city{University Park}, \postcode{16802}, \state{PA}, \country{United States of America}}

\abstract{
The mathematical characterization of social-distancing games in 
classical epidemic theory remains an important question, for their
applications to both infectious-disease theory and memetic theory.
We consider a special case of the dynamic finite-duration SI
social-distancing game where payoffs are accounted using
Markov decision theory without discounting, while distancing is constrained
by threshold-linear running-costs, and the running-cost of
perfect-distancing is finite. In this special case, we are able construct
strategic equilibria satisfying the Nash best-response condition explicitly by
integration. 
Our constructions are obtained using a new change of variables which simplifies
the geometry and analysis.
As it turns out, there are no singular solutions, and a time-dependent
bang-bang strategy consisting of a wait-and-see phase followed by a lock-down
phase is always the unique strategic equilibrium.
We also show that in a delay strategy space the bang-bang Nash equilibrium is an ESS, 
and that the optimal public policy exactly corresponds with the game equilibrium.
Social distancing only turns out to be a valuable tool in reducing outbreak
burden when the outbreak response delay falls in a Goldilocks zone, determined by the
speed of outbreak detection and the relative rolling cost of social distancing.
}

\keywords{Epidemiology, Social Distancing, Game Theory, Nash Equilibrium}

\maketitle

\section{Introduction}

The COVID-19 pandemic brought significant death and disruption to the lives of 
many people globally during 2020 and 2021. While the world waited for the
development of what was hoped to be a pandemic-ending vaccine,
social-distancing was a universally-accessible behavioral intervention
promoted by many health agencies. In principle, this was individual people
quarantining themselves on their own volition to avoid getting sick and
contributing to the epidemic. While health agencies desire their subjects to
follow their recommendations when applicable, and they do impose mandates
in exceptional circumstances like the COVID-19 pandemic,
the effectiveness of elective interventions like social distancing falls upon the 
willingness of people to maintain them. 

One approach to quantifying interventions like social distancing is to
analyze each individual's behavior in the context of the decisions of
others around them. To do this, we can appeal to classical game theory as a
model of decision problems. The social distancing game (SDG) is the most
prominent form of a continuous time (or differential) epidemic game. This type
of game represents contact-patterns as a strategy and considers the
consequences of each individual seeking to optimize their own well-being over
the course of an epidemic. One early manifestation of this game was developed by
\citet{bib:Reluga2010}, while a discrete time analogue was put-forward in
\citet{bib:Chen2012} in the parlance of a public avoidance game. 

Since this time, extensive work has been done on social distancing games, especially since 2020. 
Recent works in this vein include
\citep{bib:McAdams2020,bib:Farboodi2021,bib:Toxvaerd2020,bib:Choi2020,bib:Kordonis2022,bib:Carnehl2023,bib:McAdams2024,bib:McAdams2023,bib:Toxvaerd2024,bib:Avery2024,bib:Abel2024,bib:Schnyder2023,bib:Schnyder2025}.
These authors take many distinct approaches to studying social distancing,
from exploring the effects of different individual decision processes to modeling the impact of limited healthcare 
resources on how individuals behave. 
One key question which has not be well-addressed in the literature is whether the rational behaviors captured by these 
models are unique and under what conditions uniqueness fails. 
One serious attempt at addressing this question in the SIR SDG was made by \citet{bib:Reluga2013}, who focused on
the infinite-time horizon case but did not give a complete formal proof. 
One limitation of this work is the method of analysis is not easily generalizable to even a slightly 
more complicated game, so the general question of uniqueness in epidemic games is still unresolved. 
 
The recent paper by \citet{bib:Schnyder2025} also presents a uniqueness result for the SIR social
distancing game in the infinite-horizon case by demonstrating that the
Nash equilibrium behavior only depends on the current number of infected and the final number
of susceptibles, which also gives analytic expressions for the epidemic trajectory. However, this model deviates from 
Reluga's by accounting for the costs of infection and intervention using mathematically convenient 
but conceptually debatable convexity and information-bias. 
Accounting for costs in such a manner, potentially over modeling the 
costs in a realistic fashion, is a common theme in epidemic game theory and potentially impedes the development
of general theory which can further advance our understanding of the systems these games are meant to model. 
Just like we should strive to utilize epidemic models which accurately capture the disease dynamics in a real human population,
we should model costs in a fashion that reflect the way costs are incurred by individuals enduring an epidemic. 

The goal of this paper is to present the initial steps toward a complete
proof of the uniqueness of Nash equilibrium behavior in the SI variant of the
fixed-duration social distancing game.  Like \citet{bib:Reluga2013}, we work in
the framework developed by \citet{bib:Reluga2011}, with perfect information and
weak convexity hypotheses.  The scenario we imagine is an epidemic where
infection has a fixed cost, is easily diagnosed, permanent, and transmitted only through
peer-to-peer interactions.  The epidemic is
assumed to occur over a fixed window of time, at the end of which public health
efforts remove all further threat of infection, either through mass vaccination
or some other intervention.  While we intend the theory developed here to be
generally informative, we caution that it is NOT a perfect match to common
scenarios.  We revisit this in detail in the Discussion section, after the
model has been introduced and analyzed.
The SI model also has important applications in the spread of ideas and
information \citep{bib:Ross1911b,bib:Massad2013,bib:Dawkins1976}, and our
results are important for that field as well.

This paper proceeds in the following fashion. Our analysis begins with the
construction the SI social-distancing game.  Then we analyze this game under
the assumption of linear-threshold social-distancing costs for two different
strategy spaces.  First, using two assumptions about what the optimal behavior
should be, we reduce the strategy space to two-phase off-on strategies.  We
leverage this simplicity to establish the basic properties of the two-phase
off-on Nash equilibrium and prove it is an evolutionarily stable strategy (ESS)
\citep{bib:MaynardSmith1974} in the delay strategy space.  We are also able to
demonstrate the surprising result that the Nash behavior corresponds with the
socially optimal behavior, which reflects the discrete time result from
\citet{bib:Chen2012}.  Then for the second part, we introduce the
maximum-principle analysis, and use it to establish that two-phase off-on
strategies are the unique strategic equilibria of the SI social-distancing game
over the full time-dependent strategy space of a differential population game.

\section{SI Distancing Game}

We study the simplest epidemic with SI dynamics. Consider a population of
fixed size $N$ where people may become infected by a disease, and once
infected, they stay infected forever.  Every individual is hypothesized to be
either susceptible or infected -- there is no immunity or recovery. Let $t$ be
the date, $S(t)$ be the measure of the susceptible portion at $t$, while $I(t)$
is the measure of the infected population that is infected at $t$.  Let infection
occur by direct contact according to the law of mass-action with rate $\osigma \beta$,
where $\beta$ is the baseline transmission rate and $\osigma$ is the
relative overall reduction in transmission due to social-distancing ($\osigma \in [0,1]$).
Then the rates of change in the susceptible and infected are given as
\begin{subequations}
\label{eq:PopEpi}
\begin{align}
	\dot{S} &= - \osigma \beta SI,
	\\
	\dot{I} &= \osigma \beta SI.
\end{align}
Our epidemic starts at time $t=0$ with initial condition $S(0) = N - I_0, \, I(0) = I_0$.
\end{subequations}

Epidemic games hypothesize that there are trade-offs between the costs of
infection and the costs of prevention, and that individuals will choose
behaviors that balance these trade-offs.  Suppose the cost of getting infected
at date $t$ is $C_i(t)$, while a susceptible individual spends at rate $c_s(t)$
to prevent infection.  We define a relative infection risk for an individual
$\sigma:\mathbb{R}^+\rightarrow[0,1]$ as a function of their spending rate,
such that $\sigma(0)=1$ and $\sigma$ is decreasing.  When the population is
spending at average per capita rate $\ocs$, then the relative overall transmission
reduction $\osigma = \sigma(\ocs)$.  We assume the population is sufficiently large
that the average spending rate is approximately independent of any one individual's choice
of spending rate.

The disutility to an individual is the sum of the cost of infection and the
accrued cost from social distancing. When an individual has perfect
information about their own state, they will stop expensive prevention behaviors as soon
as they are infected.
If costs are exponentially discounted at rate $h$, then an individual
becoming infected at time $t$ would accumulate a disutility
\begin{gather}
	\int_0^{t} e^{-h\tau} c_s(\tau) d\tau + C_i(t) e^{-ht}.
\end{gather}
If a person makes it to the end of the game without being infected,
they will accumulate a disutility
\begin{gather*}
	\int_0^{t_f} e^{-h \tau} c_s( \tau) d\tau.
\end{gather*}
But each of these possible outcomes is uncertain.
If $p_s(t)$ is the probability of still being susceptible at time $t$
and $p_i(t)$ is the probability of being infected at time $t$,
then the expected disutility $D$ from $t=0$ to $t=t_f$ is
\begin{gather}
	D =
	\int_{0}^{t_f} \left( \int_{0}^{t} e^{-h\tau} c_s(\tau) d\tau + C_i(t) e^{-ht} \right) \dot{p}_i(t) dt
	+
	p_s(t_f) \int_{0}^{t_f} e^{-ht} c_s(t) dt
\end{gather}
Under System~\eqref{eq:PopEpi}, $p_i = 1-p_s$ and the hazard rate for this susceptible
individual to become infected is 
\begin{gather}
	\label{eq:ps1}
	-\dot{p}_s / p_s = \sigma(c_s(t)) \beta I
\end{gather}
with $p_s(0)=1$.  Using integration-by-parts, one can then show
\begin{equation}
	\label{eq:D1}
	D = \int_0^{t_f} e^{-ht} [c_s(t)+ \sigma(c_s(t)) \beta I(t) C_i(t) ] p_s(t) \,dt.
\end{equation}
The disutility $D$ depends on the individual's social-distancing investments $c_s(t)$,
and the course of the epidemic $I(t)$, which depends
on the average social-distancing costs $\ocs(t)$.  Thus,
$D(c_s, \ocs)$ as defined by Eq.~\eqref{eq:D1} under Eqs.~\eqref{eq:ps1} and 
\eqref{eq:PopEpi} defines a set of dynamic population games parameterized by the
values of $t_f$, $\sigma$, $C_i$, $h$, $\beta$, and $N$.

In this paper, we will restrict ourselves 
to an important special case of the SI social-distancing game where there is no discounting ($h=0$),
the cost-of infection $C_i$ is constant over time, and social distancing
has the form of a threshold-linear response
\begin{gather}
	\label{eq:maxsigma}
	\sigma(z) = \max( 0, 1-m z) = (1-mz)^+,
\end{gather}
where $m$ is the efficiency of spending. Eq.~\eqref{eq:maxsigma} implies
a rolling investment at rate of $1/m$ is enough to surely prevent infection.
Eq.~\eqref{eq:maxsigma}'s choice of $\sigma$ may seem arbitrary and
overly-restrictive, but its (weak) convexity is enough to weakly satisfy the economic
principle of diminishing returns (that investments become less efficient as
they grow because you do the easy things first). In this model, true diminishing returns
manifest only when $\sigma$ is strictly convex, but this more general setting makes
the game analysis much less tractable.
The threshold-linear form that we adopt allows us to exploit
linear-programming principles that lead to an analytically integrable system, and this proves to
be the linchpin for proving uniqueness. As we will discuss further below, while we are working
in a particularly special and convenient setting, the argument that we develop for uniqueness
can be generalized in a direction that should ultimately allow the establishment of uniqueness in 
the general SI social distancing game.


By a judicious choice of units for population, time, and money, we will
lose no generality in selecting $C_i=\beta=N=1$.  Thus, costs will
be measured relative to the cost of infection, $S$ and $I$ will represent
fractions of the total population.  In dimensionless time, and in the absence of intervention, half
the population is infected in \( \ln(1/I_0 - 1) \approx -2.3 \log_{10} I_0 \) time units.
(So if initially $1/1000$'th of the population is infected, then half the
population will be infected in about 7 time steps and only $0.1$\%
of the population will be uninfected after 14 time steps.  Around the time of
peak transmission, about 25\% of the population becomes infected in one timestep. )
We further simplify our notation, taking $p(t):= p_s(t)$
and $c := c_s/ C_i$.
And so finally, under our hypotheses of no discounting, constant infection costs, a threshold-linear
response feedback, and a perfect-vaccination terminal condition,
our dynamic SI social-distancing population game (\SISDG) has the dimensionless form
\begingroup
\begin{subequations}
	\label{eq:SISDG}
\begin{align}
	\label{eq:SISDGd}
	\tag{\SISDG-a}
	D(c ,\overline{c}) &= \int_0^{t_f} \left[c   + (1-mc)^+ I\right] p \,dt.
	\\
	p(0)=1, \quad \dot{p} &= - (1-mc)^+ I p, 
	\tag{\SISDG-b}
	\\ 
	I(0)=I_0, \quad \dot{I} &=  (1-m \oc)^+ I(1-I),\label{eq:SISDGa}
	\tag{\SISDG-c}
\end{align}
\end{subequations}
\endgroup
parameterized by the game-duration $t_f$, the efficiency-of-distancing $m$,
and the initial infected fraction $I_0$.

\begin{table}[t]
	\caption{Glossary of important symbols for \SISDG.
	We have selected units such that the total
	population size, the mass-action transmission
	rate, and the cost-of-infection are all $1$.
	\label{tbl:xs}
	}
	\begin{tabular}{cl}
	Symbol & Meaning
	\\
	\hline
		$t$ & Date coordinate
		\\ $S(t)$ & Fraction of population susceptibles
		\\ $I(t)$ & Fraction of population infected
		\\ $V(t)$ & Shadow value of susceptibility
		\\ $p(t)$ & Probability that an individual is susceptible at $t$
		\\ $c(t)$ & individual running cost of prevention
		\\ $\overline{c}(t)$ & population average running cost of prevention
		\\ $\sigma$ & relative susceptibility depending on running cost ($0 \leq \sigma \leq 1$)
		\\ $m$ & linear efficiency of running costs ($0<m < \infty$)
		\\ $t_f$ & Wait time until vaccine becomes available ($0< t_f < \infty$)
		\\ $I_0$ & Initial fraction of the population infected ($0 \leq I_0 \leq 1$)
	\end{tabular}
\end{table}

\section{Preliminary analysis}

Before proceeding, let us provide some context by first
commenting on some special limiting/boundary cases of the \SISDG.
First, observe that prevalence is increasing and bounded
($\dot{I} \geq 0$, and for $t\geq 0$, $0 < I_0 \leq I(t) <1$),
and strictly-increasing away from the extremes.
In a mean-field SI model, there is no stochastic extinction, no matter how small a faction of
the population $I_0$ is.
The probability of susceptibility at time $t$,
$p(t) \geq\exp\left(-\int_0^t I(t; I_0) dt \right) \geq \exp(-t) >0$. Because 
rolling costs are never negative ($c\geq 0$ and $c + (1-mc)^+ I \geq \min(1/m, I_0) > 0$),
the epidemic will always be expected to impose a
price on each individual ($D(c,\oc) \geq  \min(1/m, I_0)
(1-e^{-t_f}) > 0$), no matter their actions. 
On the other hand, a person
need never do worse than assured infection in the \SISDG.
(since $D(0,\oc) \leq 1-e^{-t_f} < 1$).
If the rolling cost of prevention exactly offsets the expected rolling cost of
infection ($m=1$), social distancing provides no benefits.  For all
less-efficient scenarios ($m < 1$), the upper-bound on $I(t)$ implies
rolling-costs of social distancing outweigh the expected rolling-costs of the
infection, and consequently that nobody should ever social-distance.

We are interested in finding strategic equilibria of this population game and
characterizing their properties.  We would like to find strategies that can not
be improved on when universally adopted but are themselves improvements over
adopted alternatives.  Said differently, we would like to find a strategy $c^*$
that satisfies the best-response condition
\begin{gather}
	\label{eq:nashcondition}
	c^* \in \argmin_{c} D(c,c^*)
\end{gather}
and the invasion potential condition
\begin{gather}
	\label{eq:invasion}
	c^* \in \argmax_{ \oc } \quad D(c^*,\oc) - D(\oc,\oc).
\end{gather}
A strategy that satisfies \eqref{eq:nashcondition} are called Nash equilibria, 
and a strategy that satisfies both \eqref{eq:nashcondition} and \eqref{eq:invasion}
is called an ESS (evolutionarily-stable strategy).  While there are 
ensuring the existing of Nash equilibria under relatively general conditions,
many games lack ESS strategies.  Additionally, the invasion potential condition
\eqref{eq:invasion} is difficult to analyze in many situations.  The same holds
here.  We are unable to analyze the invasion potential condition over the
general strategy space, and thus focus on proving uniqueness of those strategic
equilibria which satisfy the Nash condition, we do obtain insight by first
considering a useful subset of possible strategies.

\section{Delay strategies}

A natural first approach to the study of the SI social-distancing game
(\SISDG) is to try various strategies numerically.
With the benefits of hindsight, this is best done in a delay strategy-space.
Because we are modeling a permanent state-change, the risk-of-infection for a
susceptible individual is monotonically increasing until the terminal time. 
The threshold-linear form of $\sigma$ in Eq.~\eqref{eq:maxsigma} suggests
there may be a tipping point, before which the risk is too small relative
to the cost of distancing, and after which it is efficient to distance
maximally (a bang-bang form).  
Let us define a new coordinate variable $x \in [0,t_f]$ to specify
the duration of social-distancing at the end of the game, with $x=0$
implying no social distancing and $x=t_f$ implying social-distancing
for the full duration of the game.
A strategy can be represented in terms of the duration $x$ as 
a two-phase piecewise-constant function
$ \xi(t;x) = \operatorname{H}(t - t_f + x)/m$,
where $\operatorname{H}(t)$ is a Heaviside step function that is $0$ when $t$
is negative and $1$ when $t$ is positive.  
We only consider positive-duration games ($t_f > 0$), such that
each duration $x$ uniquely defines a strategy, and for any pair of delays
$(x,\ox) \in [0,t_f]\times[0,t_f]$, we integrate the \SISDG\xspace to
explicitly construct the delay-strategy SI social distancing population game's
(\DSISDG's) disutility 
\begin{subequations}
\begin{align}
	\D(x,\ox) := D(\xi(t;x),\xi(t;\ox))
	&= \int_{0}^{t_f-x} I p\, \mathrm{d}t
	+ \int_{t_f-x}^{t_f} p/m\, \mathrm{d}t 
	\tag{\DSISDG-a}
	\\
	=&
	\begin{cases}
		1 - q(x,\ox) \left(1-x/m \right) \quad \text{if $I_0 < 1$,} \label{eq:DSISDGb}
		\tag{\DSISDG-b}
		\\
		1 - e^{x - t_f} \left(1-x/m\right) \quad \text{if $I_0=1$,}
	\end{cases}
	\\
	\text{where} \quad
	q(x,\ox) :=&  \frac{1 - I(s)}{1-I_0} \exp\left[ \left(s - t_f + x \right) I(s) \right], \label{eq:DSISDGc}
	\tag{\DSISDG-c}
	\\
	s :=&  t_f-\max(x,\ox),
	\tag{\DSISDG-d}
	\\
	\label{eq:Ifunc}
	\text{and} \quad
	I(u) :=& I_0/\left[ I_0 + (1-I_0)\exp({-u})\right].
	\tag{\DSISDG-e}
\end{align}
Here, $q(x,\ox)$ is the probability an individual is still susceptible when
they start social-distancing.
\end{subequations}
These closed-forms for $\D(x,\ox)$ allow us to extract insights to the game-play of
the \DSISDG\xspace that are inaccessible in the general cases of the \SISDG.
In the economics literature, this is sometimes categorized as an optimal-stopping
game (where stopping here would be to stop taking risks).

\subsection{Equilibria}

\begin{figure}
	\centering \includegraphics[width=\textwidth]{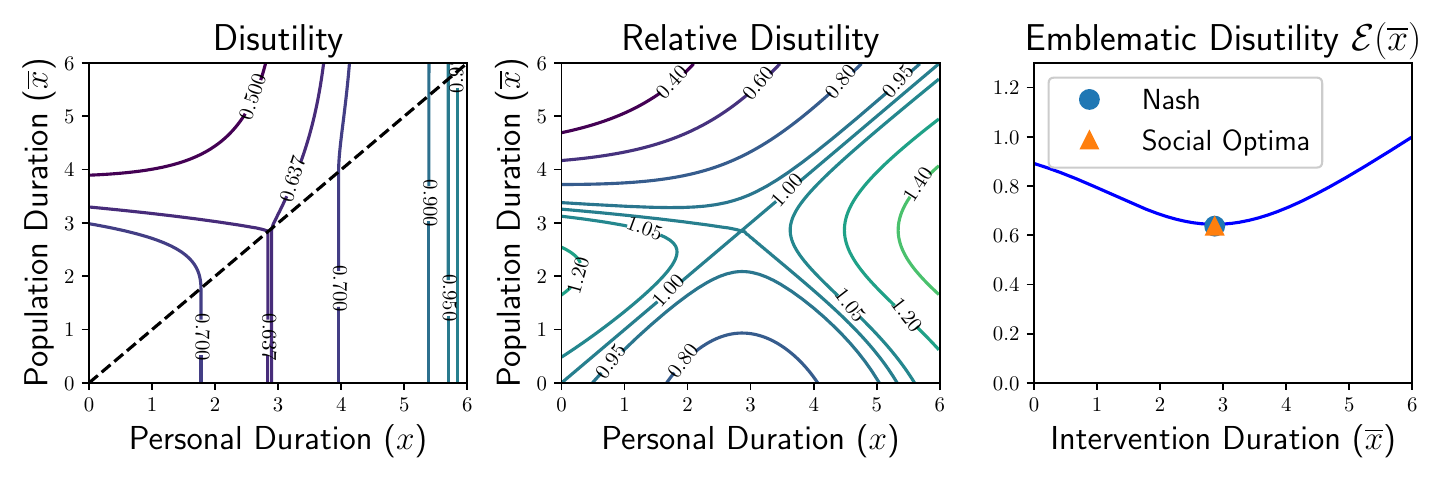} 
	\caption[Two-Phase Strategy Stability]{ \label{fig:NonConstStrat}
	The delay disutility surface
	$\D(x,\ox)$ (left), the relative delay disutility $\hat{\D}(x,\ox)$ (center), 
	and the emblematic disutility $\mathcal{E}(\ox)$ (right) when $t_f = 6$, $m=6$, and initial
	condition $I_0 = 0.02$. The strategy at the saddle point ($x^* \approx 2.87$)
	of the center plot is the Nash equilibrium among all off-on two-phase strategies
	and is also the minimizer of the emblematic disutility (right), so is also the social optimal.
	}
\end{figure}

A Nash equilibrium $x^*$ of the \DSISDG\xspace is a strategy that is
always a best-reply to itself, (a solution of Eq.~\eqref{eq:nashcondition} for
$x \in [0,t_f]$), and a strict Nash equilibrium if it is the only best-reply.
Even though the strategies being considered are step functions at their root,
we can prove that the delay disutility $\D(x,\ox)$ itself is
positive and continuous on $[0,t_f]\times [0,t_f]$, including near the line
$x=\ox$ (See~Fig.~\ref{fig:NonConstStrat}).  We can further show that the
Jacobian of $\D(x,\ox)$ is defined uniquely everywhere within this domain (Appendix~\ref{secA3}),
allowing us to directly apply standard calculus techniques to calculate the
equilibria.
The first result is that for all parameter tuples $(m,I_0,t_f)$,
there exists one and only one equilibrium point, $x^*$
(Appendix~\ref{secA2}, Theorem~\ref{thm:ESS}, Eq.~\eqref{eq:NashCondition}).
We can further show that $x^*$ is a global Nash equilibrium with invasion
potential, so is an ESS (see Appendix~\ref{secA2}, Theorem~\ref{thm:ESS}).
For games of constant risk ($I_0 = 1$), $x^* = \min( t_f, \max(m-1,0))$.
For short games where $t_f < m$ and with high initial risk ($I_0
> 1/( m - t_f)$), the equilibrium is to always distance ($x^*=t_f$).
For short games with low initial risk ($I_0 < 1/(1+(m-1)e^{t_f})$),
the equilibrium is to never distance ($x^*=0$).
For intermediate values of the initial condition $I_0$,
this equilibrium will satisfy 
\begin{gather}
	e^{x} =
	e^{t_f} \left( \frac{I_0}{1-I_0} \right) ( m - 1 - x ).
\end{gather}
This transcendental equation has a single real solution $x^*$ with a closed-form representation in terms of
the Lambert-Corless $\operatorname{W}$-function \citep{bib:Corless96},
%
\begin{gather}
	\label{eq:DSISDG:nash}
	x^*(m,I_0,t_f) := \begin{cases}
		0 & \text{if $I_0 (1+(m-1)e^{t_f}) < 1$},
		\\
		t_f & \text{if $I_0 (m-t_f) > 1$},
		\\
		m - 1 - \operatorname{W}\left( \left(\frac{1}{I_0} - 1\right) e^{m-1 - t_f} \right)
		& \text{otherwise.}
	\end{cases}
\end{gather}
Asymptotically, for long games ($t_f \rightarrow \infty$),
\begin{gather}
	x^* = m-1 - (1/I_0-1) e^{m-1-t_f} + o(e^{-t_f}).
\end{gather}
On the other hand, when the distancing is brief and $I_0 < 1/m$,
\begin{gather}
	\label{eq:nashline}
	x^* =
	(1 - 1/m) \left[t_f - \ln{(1/I_0-1)} + \ln(m-1)\right]^+ + O(t_f^2).
\end{gather}
Given our exact solution, 
we omit the construction of the interpolating corner-layer
in the intermediate region of $t_f$ where both approximations
break down, as it supplies no additional conceptual context.
The parameter-dependences of $x^*$ are illustrated in Figure~\ref{fig:gamespacenash}.
A dimensional example is shown in Figure~\ref{fig:dimen}.

\begin{figure}
	\begin{center}
	\includegraphics[keepaspectratio,width=\linewidth]{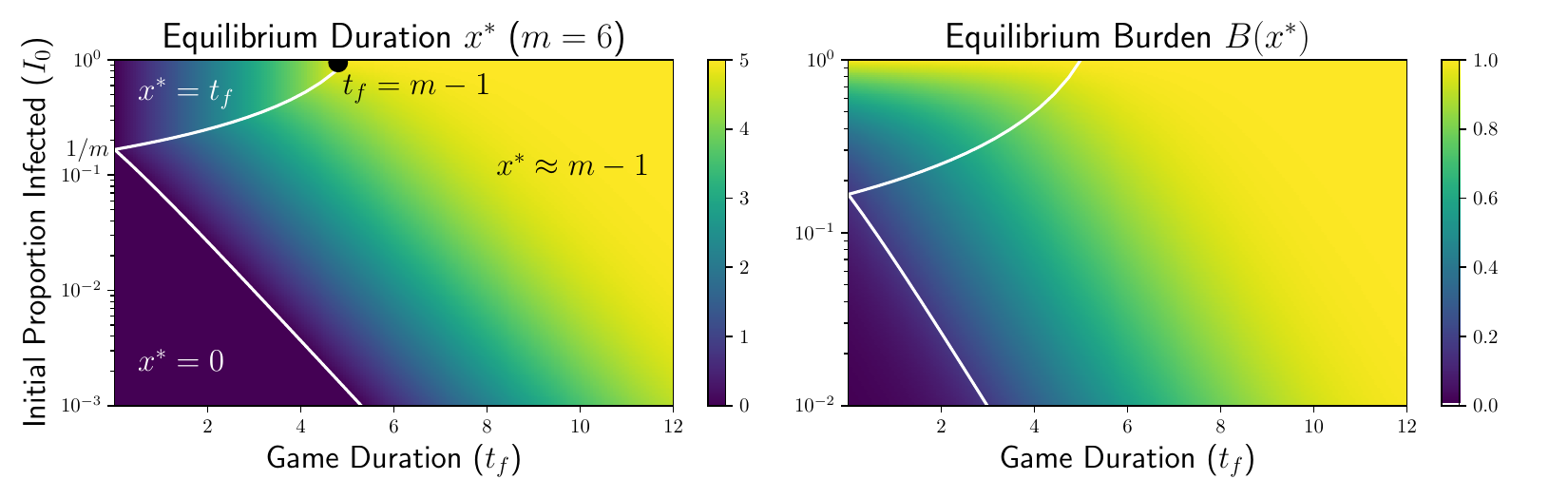}
	\end{center}
	\caption{ \label{fig:gamespacenash}
	Colorized contour images of the (left) equilibria duration of social
	distancing $x^*$ of the \DSISDG\xspace as
	a function of the game-duration ($t_f$) and the initial proportion
	infected ($I_0$) depends when the linear distancing efficiency $m=6$.
	As games get longer, the duration of distancing increases from $0$
	to about $5$.  The white line divides the surface into three
	regions depending on whether no distancing is used ($x^*=0$), distancing
	is used for the full game-duration ($x^*=t_f$), or some intermediate
	amount is used.  The equilibrium per capita burden
	(right, Eq.~\eqref{eq:burden}) increases monotonically as both the game duration
	and the initial case count increase.  When $I_0 \approx 1$, almost everybody
	gets sick and the burden is near $1$ also, even when the game is short
	and susceptibles social distance the whole time.
	}
\end{figure}

\begin{figure}
	\begin{center}
	\includegraphics[keepaspectratio,width=3in]{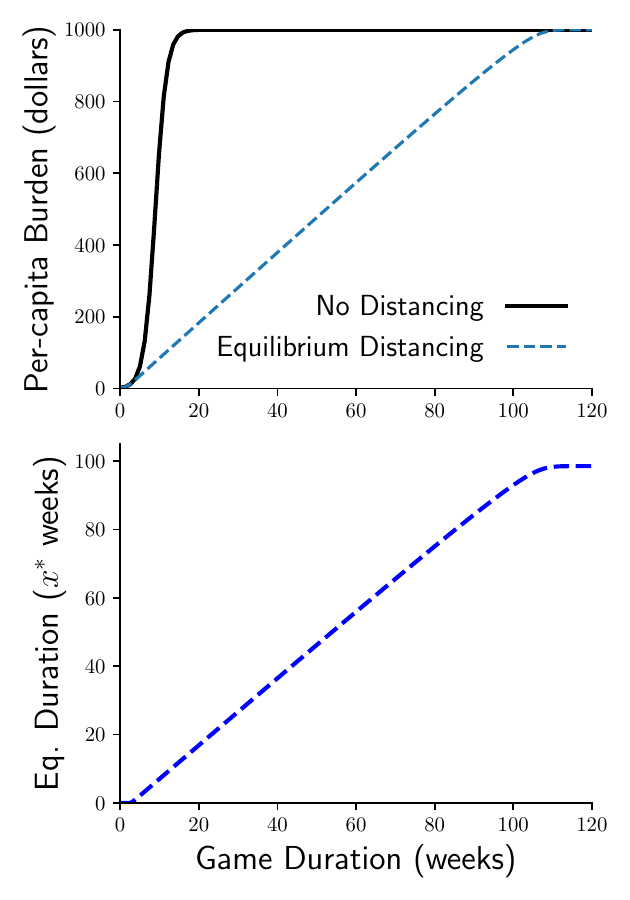}
	\end{center}
	\caption{
	\label{fig:dimen}
	A dimensionalized example of equilibrium (bottom) and corresponding burden (top,
	Eq.~\eqref{eq:burden}).  For a community of 10,000 people the epidemic is detected
	when the first $\hat{I}_0 = 20$ are infected.  The epidemic is initially
	doubling in size each week ($=\ln 2 / \hat{\beta} \hat{N}$), infected people
	expect to lose \$1,000 ($=C_i$) from the infection, but are not willing to pay
	more than \$10 ($=1/\hat{m}$) a week in social distancing costs, then
	social-distancing can only reduce the burden significantly if the vaccine is
	rolled out in less than 100 weeks.
	}
\end{figure}

\subsection{Interpretation}

Further analysis is needed to interpret the implications of
$x^*(m,I_0,t_f)$ from Eq.~\eqref{eq:DSISDG:nash}.  We construct the
emblematic delay disutility 
\begin{gather} \label{eq:emblem}
	\mathcal{E}(\ox) := \D(\ox,\ox) = 1 - \frac{(1-\ox/m)}{ 1 - I_0 + \exp(t_f-\ox)}
\end{gather}
which tells us the typical accrued costs to an initially-susceptible
individual playing the population-strategy $\ox$.  The relative
delay disutility
\begin{gather}
	\hat{\D}(x,\ox) := \D(x,\ox)/\mathcal{E}(\ox)
\end{gather}
tells us how badly off a strategy $x$ performs relative to the
population strategy $\ox$.  For an example, see
Figure~\ref{fig:NonConstStrat}.

Our first conceptual result is that the 
\DSISDG\xspace never suffers from free-riding (Appendix~\ref{secA2}) --
the strategy that optimized the populations welfare also
optimizes the individual's well-being in the sense that
\begin{gather}
	x^* = \argmin_{\ox \in [0,t_f]} \mathcal{E}(\ox).
\end{gather}
At equilibrium, the emblematic disutility has the convenient form
\begin{gather}
\mathcal{E}(x^*) =
	\begin{cases}
		(1-I(t_f))/(1-I_0), &\text{if $x^*=0$},
		\\
		(1+x^* - m I_0)/m (1-I_0), &\text{if $x^* \in (0,t_f)$},
		\\
		t_f/m, & \text{if $x^* = t_f$}.
	\end{cases}
\end{gather}

To see how much social-distancing would actually help (see~Fig.~\ref{fig:gamespacenash}-\ref{fig:improvement}), we define
the epidemic burden on the
population as the sum of the accrued costs of infection and
prevention over the whole population, 
\begin{align}
	\label{eq:burden}
	B(\ox) :=& I_0 + (1-I_0) \mathcal{E}(\ox)
		\\
		=& I(t_f-\ox) + \frac{\ox}{m} \left( 1 -  I(t_f-\ox) \right)
\end{align}
and the equilibrium improvement-over-indifference
\begin{gather}
	\label{eq:improve}
	\Delta B^* := B(0) - B(x^*)
\end{gather}
tells us how much equilibrium social-distancing reduces the burden of
the epidemic relative to the case where no preventative actions are
taken ($\ox=0$).
Figure~\ref{fig:twoburdens} gives two example intensity-plots of
the burden for $I_0=1$ and $I_0 = 10^{-4}$.

Numerically, we can see that the impact of social-distancing on
overall public-health is concentrated at intermediate game-durations
(Figure~\ref{fig:improvement} and \ref{fig:meffect}) -- roughly $\ln m I_0$ to $m + \ln mI_0$.
The improvement-over-indifference $\Delta B^*(m,I_0,t_f)$ defined
by Eq.~\eqref{eq:improve} is maximized when the game duration
\begin{gather}
	\label{eq:tpeak}
	t_f^{@}(m,I_0)
	:\approx \ln\left( \, (m-1)(1/I_0 - 1) \, \right),
\end{gather}
which increases as both the initial case-fraction and the efficiency
of distancing increase.  This is derived through differentiation
of \eqref{eq:improve} after substitution of \eqref{eq:nashline}.  
As $I_0 \rightarrow 0$,  we can show there is an upper bound on the improvement-over-indifference
that only depends on the efficiency-of-distancing (see Fig.~\ref{fig:meffect}), 
\begin{align}
	\max_{t_f} \Delta B^* \approx
	1 - \frac{2}{m} - \frac{ 2 (m - 1) \ln (m - 1 )}{m^2}
	\approx
	1 - \frac{2(1+\ln m)}{m}.
\end{align}
For games much shorter than $t_f^{@}$, risk-of infection never rises
high enough for social-distancing to have much of an impact.  For games much
longer than $t_f^{@}$, the equilibrium's distancing-induced reduction in expected total costs
is very small.  Thus, social distancing has a Goldilocks window of the outbreak response time, depending
on when the outbreak is detected and the efficiency of social distancing.

\begin{figure}
	\begin{center}
	\includegraphics[keepaspectratio,width=5.5in]{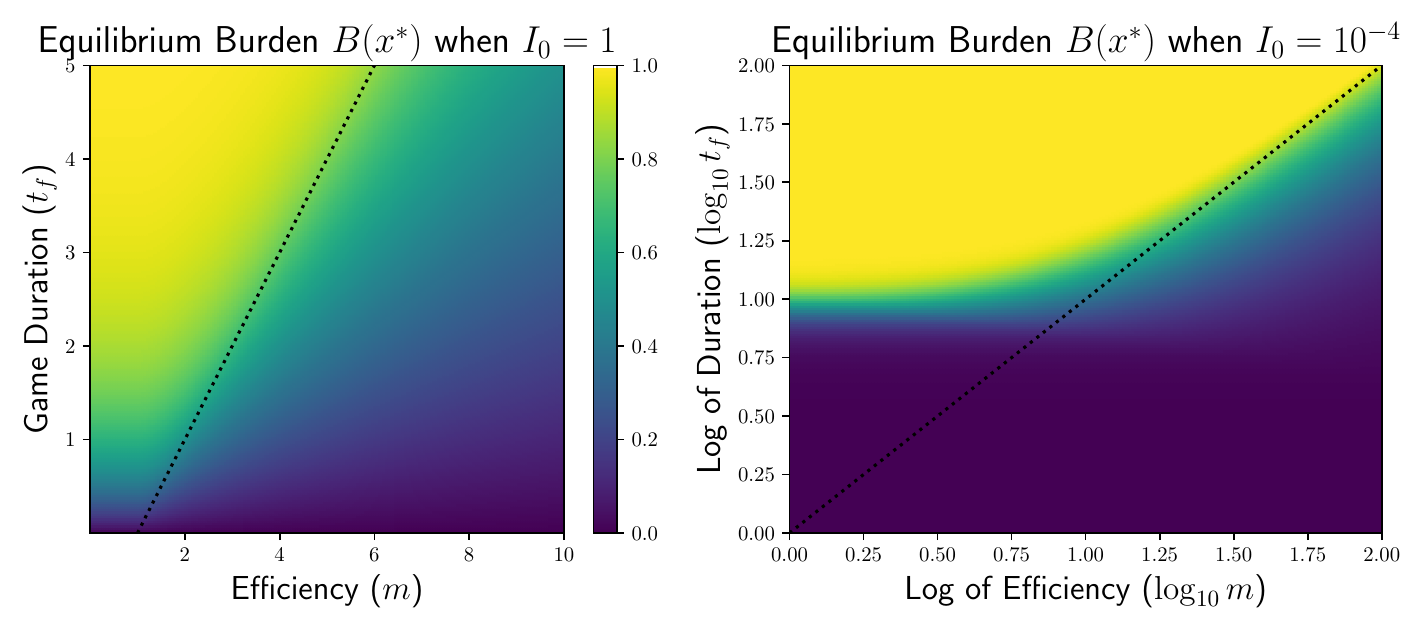}
	\end{center}
	\caption{
	\label{fig:twoburdens}
	Epidemic burden as functions of $t_f$ and $m$ for given values of $I_0$.  Dotted
	black line represents $t_f = m-1$.
	The case of $I_0=1$ (left) is
	gameless, as infection-risk is constant and independent of player
	actions.  We see the burden increases monotonically as duration $t_f$
	get longer but decreases monotonically as social-distancing is 
	made more efficient.  In the case of $I_0=10^{-4}$ (right), burden is small
	for short games because infection risk never becomes
	substantial, but for long games, the burden only decreases
	significantly once $m/t_f \approx 1$.
	}
\end{figure}

\begin{figure}
	\begin{center}
		\includegraphics[keepaspectratio,width=\linewidth]{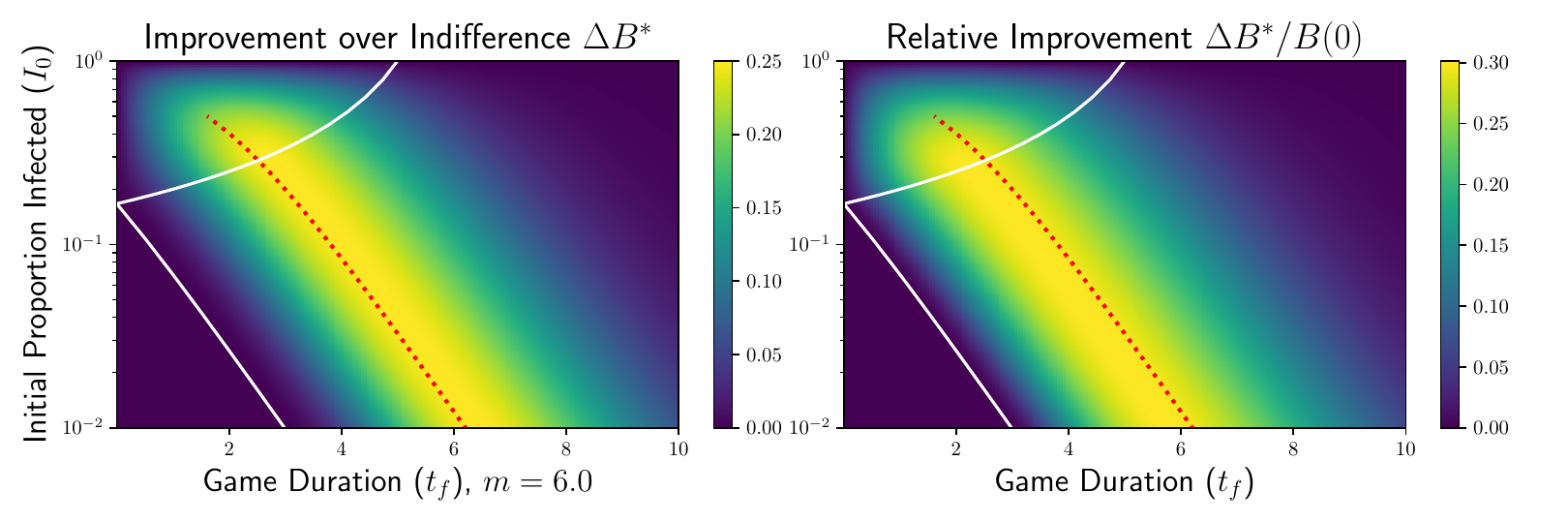}
	\end{center}
	\caption{ \label{fig:improvement}
	Colorized contour image of 
	the reduction in burden by social distancing ($\Delta B^*$, Eq.~\eqref{eq:improve})
	(left),
	and the relative reduction in burden (right) when $m=6$.
	The plots clearly show that for this epidemic
	model, the public-health value of social-distancing is concentrated
	at intermediate durations of time between detection and mass-vaccination.
	The dashed red line is our approximation of the time $t_f^@(m,I_0)$
	(Eq.~\eqref{eq:tpeak}) when the improvement-over-indifference
	$\Delta B^*$ is maximal.  
	}
\end{figure}

\begin{figure}
	\begin{center}
		\includegraphics[keepaspectratio,width=3in]{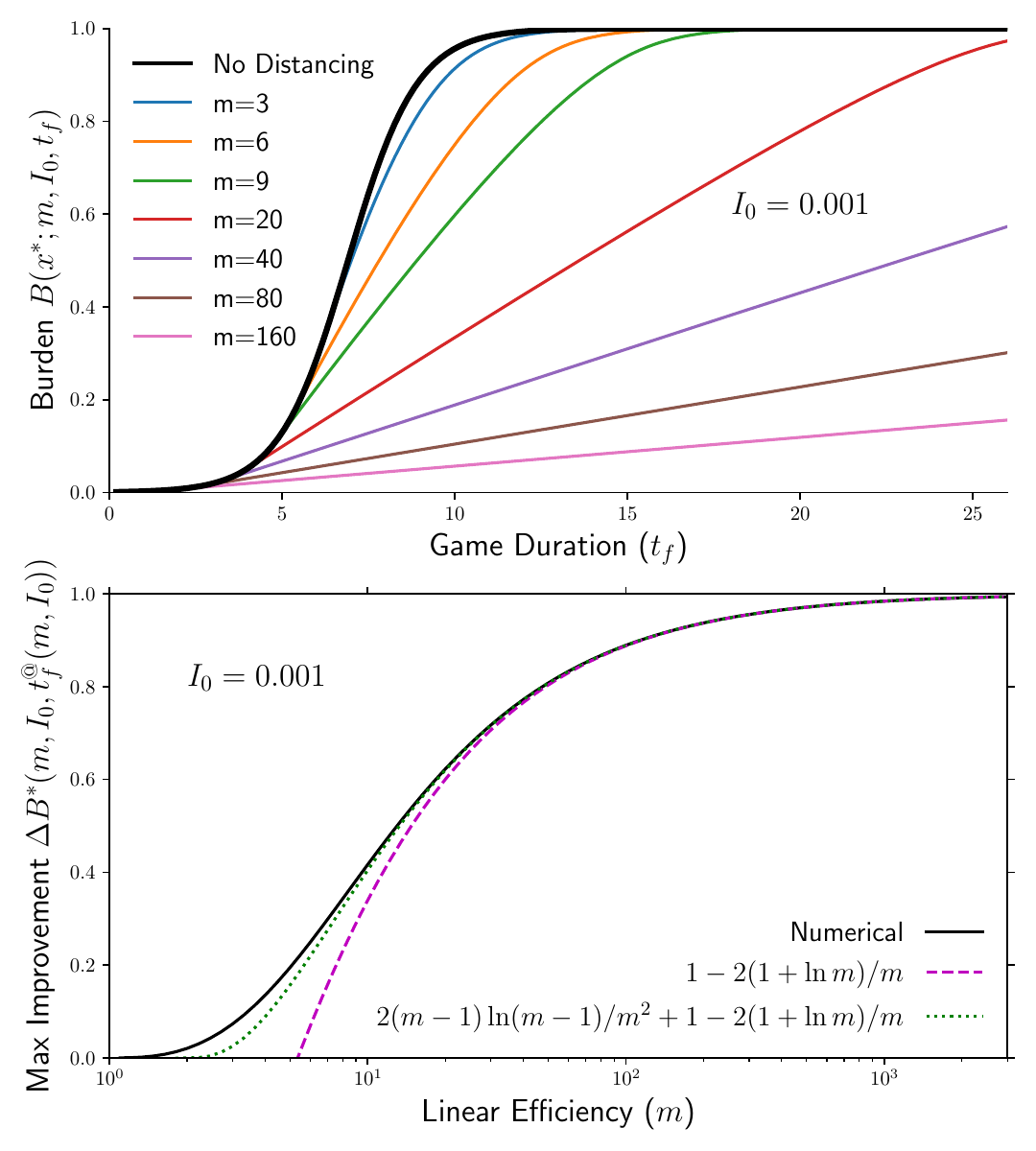}
	\end{center}
	\caption{
	\label{fig:meffect}
	The burden of epidemic under equilibrium distancing
	is always an increasing function of the game duration (top).
	Increasing the efficiency of distancing decreases the rate
	of burden increase, but for game durations that are very large
	or very small, equilibrium social distancing can not meaningfully reduce the 
	burden.  The Goldilocks window for distancing reducing burden gets larger
	roughly in proportion to the efficiency of distancing.
	The maximum improvement in burden (bottom) increases slowly
	with $m$.
	Parameter: $I_0 = 10^{-3}$.
	}
\end{figure}

\section{General Analysis}

To generalize our analysis, we will allow individuals to adopt piecewise-continuous strategies $c:[0,t_f] \to [0,\infty)$. 
In this setting, Eq.~\eqref{eq:nashcondition} becomes
\begin{equation}
\label{eq:NashCond}
	c^* \in \argmin_{c\,\in\, [0,t_f] \rightarrow [0,\infty)} D(c,c^*). 
\end{equation}
Our present goal is to establish that each game described by
\SISDG\xspace has just one Nash equilibrium. 

An additional condition we require is that $c^*$ is subgame-perfect \citep{bib:Fudenberg1991}. In
practice, this condition simplifies the process of determining a best-response,
and implies that all individuals are acting in their own
best-interest at all times between the start and end of the game.

To analyze the \SISDG\xspace for its more general strategy space, 
we adopt the standard approach of fixing the population behavior $\overline{c}$ and determining the 
best response $c_B(\overline{c})$ using a Hamiltonian approach, 
specifically through the application of Pontryagin's Maximum Principle \citep{bib:Clarke2013}. 

We define the Hamiltonian as 
\begin{equation}
	\mathscr{H} := - [c + (1-mc)^+I]p - (1-mc)^+ I pV + k (1 - m \overline{c})^+ I(1-I).
\end{equation}
Applying the maximum principle, we obtain the present time personal state adjoint equation
\begin{equation}
\dot{V}  = (1-mc)^+ I (V+1) + c.
\end{equation}
with terminal condition $V(t_f) = 0$. Because we treat the population behavior
$\overline{c}$ as fixed, we can safely ignore $k$ and its corresponding
equation. This is further justified by the following: $V(t)$ can be interpreted
as the present time cost of being susceptible \citep{bib:Clark2010}, and
applying integration by parts to the \SISDG, we obtain
\begin{equation}
\label{eq:DV}
D(c, \overline{c}) = -V(0). 
\end{equation}
Therefore, the initial cost of being susceptible at time 0 is equivalent (up to the sign) to the disutility of playing $c$ against $\overline{c}$. This relation will be key to our analysis, as it will allow us to connect the boundary value problem obtained through the optimization process to an initial value problem that we can construct solutions for. 

To determine $c_B(\overline{c})$, we utilize an argument typically deployed in dynamic programming. 
In order to minimize the disutility, 
Eq.~\eqref{eq:DV} states that we need to minimize the magnitude of $-V(0)$. Starting at $V(t_f) = 0$, $c_B(\overline{c})$ is
constructed in reverse time to lead to the minimal increase in $-V(t)$ at each $t \in [0,t_f]$. Therefore, we must have 
\begin{equation}
\label{eq:bestresponse}
c_B(\overline{c}) \in \argmin_{c \in [0,\frac{1}{m}]} \{ (1-mc)^+ I(V+1) + c\}.
\end{equation}
Constructing the best response in this fashion removes explicit dependence upon $\overline{c}$, only depending on the population behavior through the state $I$. 
This dependency of $c^*$ on only the state variables $I$ and $V$ implies that both the personal and population 
strategies are specified at each point in time through Eq.~\eqref{eq:bestresponse}. 
This naturally constructs a Nash equilibrium because the two strategies match and the personal strategy is by construction a 
best response. Also, because this strategy is optimal at each time point by construction, it is also subgame perfect. 

Taking $c^*$ to be the Nash strategy which satisfies Eq.~\eqref{eq:bestresponse} for each $(I,V)$, we obtain the 
Filippov boundary-value problem \citep{bib:Filippov}
\begin{subequations}
\label{eq:OldSystem}
\begin{align}
I(0)=I_0, \quad
\dot{I} &= (1-mc^*)^+ I(1-I), \label{eq:OldSystema}\\
\dot{V} &= (1-mc^*)^+ I(V+1) + c^*, \quad V(t_f)=0, \label{eq:OldSystemb}\\
c^*(I,V) &\in \label{eq:OldSystemc}
\begin{cases}
\{0\} & \text{if} \; m I \big (V+ 1 \big) < 1, \\
[0,1/m] & \text{if} \; m I \big (V+ 1 \big) = 1, \\
\{1/m\} & \text{else}.
\end{cases}
\end{align}
\end{subequations}
with boundary conditions $I(0) = I_0$ and $V(t_f) = 0$. It should be noted that the ``bang-bang'' form of $c^*$ given by 
Eq~\ref{eq:OldSystemc} is entirely due to the linearity of $(1-mc)^+$ for $c \in [0,\frac{1}{m}]$. If $\sigma$ were instead
strictly convex, determining $c^*$ from Eq~\ref{eq:bestresponse} would generally give an equilibrium strategy that 
changes continuously instead of having distinct regions of constancy. This continuous variation in $c^*$ is one of
the primary reasons why the general setting is significantly more difficult than the setting we are working in. As will be 
demonstrated below, our argument leans heavily on these regions of constant $c^*$. 

The feedback-response form of $c^*(I,V)$ allows us to study the properties of solutions to the \SISDG\xspace
from the phase-plane structure of System~\eqref{eq:OldSystem}.
To aid us, we restrict our study this system to $(I,V) \in (0,1] \times ( -\infty, \infty)$. 
To perform our analysis, we need to translate this boundary value problem into an initial value problem 
with initial condition $(I_0, V_0)$. 
To aid us, the following lemma will restrict the set of $V_0$ which give 
solutions corresponding to a game. 

\begin{lemma}
\label{lemma:divergence}
Fix $I_0 \in (0,1]$ and $V_0 \in (-\infty, \infty)$. Suppose $V(t)$ solves Eq.~\eqref{eq:OldSystemb}. 
If there exists a $t \in [0,\infty)$ with $V(t) = 0$, then $V_0 \in [-1,0]$.
\end{lemma}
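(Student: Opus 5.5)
The plan is to study the sign of $\dot V$ in Eq.~\eqref{eq:OldSystemb} directly, using the feedback form \eqref{eq:OldSystemc} of $c^*$. We show that the two half-lines $V<-1$ and $V>0$ are forward-invariant and contain no zero of $V$ for $t\ge 0$. Then any trajectory with $V(t)=0$ for some $t\ge0$ must have started with $V_0\in[-1,0]$. Throughout we use $0<I_0\le I(t)\le 1$, which holds because $\dot I\ge 0$ in \eqref{eq:OldSystema}.

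\emph{Step 1: the case $V_0<-1$.} Set $W=V+1$. Whenever $W<0$ we have $mI(V+1)<0<1$, so \eqref{eq:OldSystemc} forces $c^*=0$, and the equation reduces to $\dot W = I W$. This is a linear equation with $I>0$, so $W(t)=W(0)\exp\left(\int_0^t I\,d\tau\right)$ stays strictly negative, and indeed decreases, for all $t\ge0$. Hence $V(t)<-1<0$ for all $t$ and $V$ never vanishes. The point where this is legitimate is that we remain strictly inside the region $\{c^*=0\}$: the solution is a classical ODE solution there, and it cannot reach the switching surface $mI(V+1)=1$, which lies in $V>-1$.

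\emph{Step 2: the case $V_0>0$.} For any admissible $c\in[0,1/m]$ we can rewrite the right-hand side as
\begin{equation*}
(1-mc)I(V+1)+c = I(V+1) + c\,\bigl(1-mI(V+1)\bigr).
\end{equation*}
This is affine in $c$, so over $[0,1/m]$ it lies between its endpoint values $I(V+1)$ (at $c=0$) and $1/m$ (at $c=1/m$). Therefore, whenever $V>-1$, every element of the Filippov set-valued right-hand side satisfies
\begin{equation*}
\dot V \ \ge\ \min\bigl(I_0(V+1),\,1/m\bigr) \ >\ 0 .
\end{equation*}
This bound covers the sliding case $mI(V+1)=1$, where $c^*\in[0,1/m]$, because the bound holds for all admissible $c$. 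Consequently $V$ is strictly increasing while $V>-1$. Starting from $V_0>0$, this gives $V(t)\ge V_0>0$ for all $t\ge0$, so again $V$ never vanishes.

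\emph{Step 3: combine.} By Steps 1 and 2, $V_0\notin(-\infty,-1)\cup(0,\infty)$, i.e.\ $V_0\in[-1,0]$. In the boundary case $V_0=-1$ we have $c^*=0$ and $\dot V=0$, so the solution is constant; this case is consistent with the statement and needs no separate treatment.

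The main obstacle is justifying these sign arguments for Filippov solutions rather than classical ones, especially in Step 2, where the trajectory may slide along $mI(V+1)=1$. I would handle it by noting that the lower bound on $\dot V$ holds for the entire convex hull of admissible velocities, so it holds almost everywhere along any absolutely continuous Filippov solution. Integrating then gives monotonicity.
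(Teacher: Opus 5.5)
Your proposal is correct and follows essentially the same route as the paper, a sign analysis of $\dot V$: on $\{V<-1\}$ the feedback forces $c^*=0$ and $V$ strictly decreases, while on $\{V>0\}$ one has $\dot V>0$ for every $c\in[0,1/m]$, so $V$ cannot reach $0$ from either region. Your explicit solution $W(t)=W(0)e^{\int_0^t I\,d\tau}$ and the bound $\dot V\ge\min(I_0(V+1),1/m)$, which holds on the whole Filippov set including sliding, are more careful than the paper's version but do not change the argument.
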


\begin{proof}
We will prove this by showing that  $V_0$ outside $[-1,0]$ do not give $V(t)$ that ever satisfy the terminal condition. 
When $V < -1$, $c^* = 0$ because $I(V+1) < 0 < \frac{1}{m}$. In this case, we have 
\[
\dot{V} = I(V+1) < 0.
\] 
This implies that $V(t)$ is decreasing for all $V < -1$, and thus for no $t \in [0,\infty)$ can $V(t) =0$. 

When $V > 0$, for all $c^* \in \Big[0,\frac{1}{m} \Big]$,
\begin{gather}
\dot{V} = (1- m c)^+ I (V+1) + c > 0. 
\end{gather}
This implies that $V(t)$ is increasing for all $V > 0$, and thus for no $t \in [0,\infty)$ can $V(t) =0$. 
\end{proof}

Lemma~\ref{lemma:divergence} 
restricts the domain where we consider solution trajectories to the system to $(I,V) \in (0,1]\times[-1,0]$. We will
exclude the disease free state $I = 0$ from further consideration since it is trivially $c^* = 0$ for any duration. 

Our approach to establish uniqueness of $c^*$ is to 
integrate System~\eqref{eq:OldSystem} forward in time from the initial line defined by $I(0) = I_0$ and 
determine the terminal time function $t_f(V_0;I_0)$ such that $V(t_f(V_0;I_0)) = 0$. 
Uniqueness for the game follows by demonstrating a 
bijective relationship between the game duration $t_f$ and the initial adjoint value $V_0$. 
With such a bijection, for each game duration $t_f$ there exists a 
unique initial $V_0$ such that $V(t)$ satisfies the necessary condition with $V(t_f) = 0$. Since the optimal strategy is defined
in feedback-response form, the solution trajectory of the IVP from $(I_0,V_0)$ under $c^*(I,V)$ to the terminal time $t_f$ 
uniquely specifies $c^*$. 
This approach is mentioned by \citet{bib:Reluga2013} for the finite-horizon case, but the complexities of that system 
make it more difficult. 

Before we begin our analysis, it is helpful to transform $V$ into a form that is easier to use. 
We see in Eq.~\eqref{eq:OldSystemc} that the level sets of the decision potential function
\begin{equation}
\label{eq:phidef}
\Phi(I,V) := I \big ( V + 1 \big)
\end{equation}
determine the value of $c^*$. 
For the following arguments we will fix $I_0>0$, and since prevalence is monotonically increasing,
we can replace $V$ with $\Phi$ as a state variable
since for each $I>0$ there is a continuous bijection between $\Phi$ and $V$ given by
\[
V = \frac{\Phi}{I} -1.
\]
We depict the result of this transformation in Figure~\ref{fig:CoordTrans}.

\begin{figure}
\centering
\scalebox{1.25}{
\includegraphics{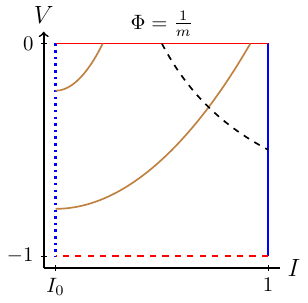}
~
\includegraphics{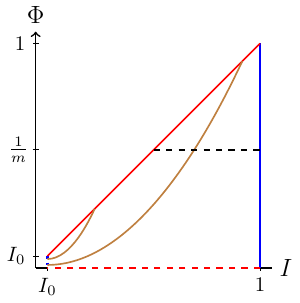}
}
\caption[Coordinate Transformation]{\label{fig:CoordTrans}The enclosed region is admissible for solution trajectories and depicts how they are mapped between $(I,V)$ space and $(I, \Phi)$ through $\Phi = I (V + 1)$. The dashed line is the transition line $\Phi = \frac{1}{m}$ and the two brown curves represent how trajectories are mapped after the coordinate transformation.}
\end{figure}

Although $\Phi$ as defined is a geometric quantity, 
the dynamics of $I$ and $V$ imply a dynamic for $\Phi$.
Using the chain rule, $\Phi(t) := \Phi(I(t),V(t))$ must satisfy
\begin{equation}
\frac{d}{dt} \Phi = \nabla \Phi \cdot (\dot{I}, \dot{V}) = (1-mc)^+ \Phi + cI.
\end{equation}
The initial value $(I(0), V(0)) = (I_0,V_0)$ gives $\Phi(0) = \Phi_0 = I_0(V_0 + 1)$. 
One of the key geometric changes resulting from this
transformation is that the terminal condition $V=0$ becomes the line $\Phi = I$.  
After eliminating $V$ this transformation results in the equivalent IVP that will be the focus of our study.

\begin{problem}
\label{nc:ch2}
If $c^*(t)$ is a subgame perfect Nash equilibrium to a \SISDG\xspace of duration $t_f \in[0, \infty)$ 
with efficiency $m>0$,  then 
\begin{subequations}
	\label{eq:IPhiopt}
	\begin{equation}
	c^*(\Phi) \in
	\begin{cases}
	\{0\} & \text{if } \Phi < 1/m, \\
	[0,1/m] & \text{if } \; \Phi = 1/m, \\
	\{1/m\} & \text{else}.
	\end{cases}
	\end{equation}
	where $\Phi(t_f) = I(t_f)$, $I(0) = I_0$, and
	\begin{align}
	\dot{I} &= (1-mc^*)^+ I (1-I),\label{eq:IPhiopta}\\
	\dot{\Phi} &= (1-mc^*)^+  \Phi + c^* I. \label{eq:IPhioptb}
	\end{align}
\end{subequations}
\end{problem}
The primary feature of this coordinate transformation into decision potential form is that
the optimal strategy $c^*$ becomes solely dependent on $\Phi$, which makes phase-plane analysis of this system simpler. 
When $c^*$ is constant, we have explicit solutions to Eq.~\eqref{eq:IPhiopta} (Eq.~\eqref{eq:Ifunc}) and Eq.~\eqref{eq:IPhioptb} 
in the following lemma. 
\begin{lemma}
	If $I(t_0) \in (0,1]$, $\Phi(t_0) \in [0,1]$, and $c  \in \big[0, \frac{1}{m}\big]$ be fixed, and if $I(t)$ be given by 
	Eq.~\eqref{eq:Ifunc}, then the solution to Eq.~\eqref{eq:IPhioptb} is given by 
	\begin{equation}
		\label{eq:Phisol}
		\Phi(t) \exp{\big(-\sigma(c)(t-t_0) \big)} - \Phi(t_0) = \int_{t_0}^t c I(s)  \exp{\big( - (\sigma(c) + h)(s-t_0) \big)} \; ds
	\end{equation}
\end{lemma}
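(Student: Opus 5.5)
Since $c$ is held fixed on the interval under consideration, the coefficient $\sigma(c) = (1-mc)^+$ in Eq.~\eqref{eq:IPhioptb} is a constant in $[0,1]$. Eq.~\eqref{eq:IPhioptb} then reads $\dot{\Phi} = \sigma(c)\Phi + c I(t)$, a scalar first-order linear ODE with constant coefficient and a known forcing term $cI(t)$. My plan is to solve it with the standard integrating factor $\exp(-\sigma(c)(s-t_0))$. Throughout I use the paper's standing assumption $h=0$, under which the factor $\exp(-(\sigma(c)+h)(s-t_0))$ in Eq.~\eqref{eq:Phisol} reduces to $\exp(-\sigma(c)(s-t_0))$.

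The first step is to record what $I$ is. With $c$ constant, Eq.~\eqref{eq:IPhiopta} is the logistic equation with rate $\sigma(c)$. Its solution starting from $I(t_0)\in(0,1]$ is the (time-shifted and rate-scaled) form of Eq.~\eqref{eq:Ifunc}; when $c=1/m$ it degenerates to the constant $I(t_0)$. In every case $I$ is continuous and bounded in $(0,1]$. This makes $s\mapsto cI(s)e^{-\sigma(c)(s-t_0)}$ continuous, hence integrable on $[t_0,t]$. This is the only place the hypotheses $I(t_0)\in(0,1]$ and $c\in[0,1/m]$ are used; the bound $\Phi(t_0)\in[0,1]$ is not needed for the formula itself.

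The second step is the computation. Differentiate $\Psi(s) := \Phi(s)e^{-\sigma(c)(s-t_0)}$ and substitute Eq.~\eqref{eq:IPhioptb}. The $\sigma(c)\Phi$ terms cancel, leaving $\dot{\Psi}(s) = cI(s)e^{-\sigma(c)(s-t_0)}$. Integrating from $t_0$ to $t$ and using $\Psi(t_0)=\Phi(t_0)$ gives exactly Eq.~\eqref{eq:Phisol}. Conversely, the right-hand side is globally Lipschitz in $\Phi$, so the solution with given $\Phi(t_0)$ is unique, and the formula describes \emph{the} solution rather than merely \emph{a} solution.

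I do not expect a real obstacle; the lemma is essentially variation of constants. The only points needing care are bookkeeping ones. One is confirming that the $h$ in the stated exponent is zero in this setting, so the integrating factor matches. The other is checking that the degenerate case $\sigma(c)=0$ is covered: there the formula becomes $\Phi(t)-\Phi(t_0) = (t-t_0)cI(t_0)$, which is correct.
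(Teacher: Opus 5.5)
Your proof is correct and follows the paper's approach: the paper says only that the formula "follows from using an integrating factor," and you carry out that computation with $e^{-\sigma(c)(s-t_0)}$ under the standing assumption $h=0$. Your checks of the $h=0$ bookkeeping and of the degenerate case $\sigma(c)=0$ (which gives $\Phi(t)=\Phi(t_0)+I(t_0)(t-t_0)/m$) agree with how the paper later uses the lemma.
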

This follows from using an integrating factor, and when $c = 0$ or $\frac{1}{m}$, we will be able to integrate this into analytic functions which will greatly simplify our analysis. 

It is a standard result from \citet{bib:Filippov} that solutions exist for the initial value problem for System~\eqref{eq:IPhiopt} with
initial condition $(I_0, \Phi_0) \in (0,1]\times (0,I_0]$. Solutions to Filippov Systems are not generally unique, but solutions to
System~\eqref{eq:IPhiopt} are unique because $\dot{\Phi}>0$, particularly on the transition line $\Phi = \frac{1}{m}$. 
This is clear since $I_0,\Phi_0>0$ and $c^* \in [0,\frac{1}{m}]$, so by Eq~\ref{eq:IPhioptb}, $\dot{\Phi} > 0$ at the initial time.
From here, Eq~\ref{eq:Ifunc} demonstrates that $I(t) > 0$ for all $t \geq 0$, so $\dot{\Phi}$ can only ever become non-positive if 
$\Phi$ is, but this would  clearly require $\Phi_0 \leq 0$ which is excluded; therefore, $\dot{\Phi}>0$. 
Details on uniqueness of Filippov systems in general can be found in \citep{bib:Filippov}. 
We note in passing that the situation is more complicated in the presence of discounted
future returns ($h>0$) -- $\dot{\Phi}$ can sometimes change sign along the transition line and one must take care to establish
that the induced non-uniqueness of the Filippov System is irrelevant for the boundary conditions.
of System~\eqref{eq:IPhiopt}. These details will be addressed in a forthcoming sequel to this work. 

\section{Proof of Uniqueness}

We now present our proof of uniqueness of the Nash equilibrium $c^*$ for
the \SISDG.  $\Phi(t)$ is monotone increasing and can be integrated
to an analytic expression, which allows us to avoid many difficulties.
In particular, the optimal strategy can change at most once when $\Phi = \frac{1}{m}$,
and we denote that time as $\tau$. Also, because $\dot{\Phi} \geq
\min{\big(\Phi_0,\frac{I_0}{m} \big)} > 0$, and the game terminates at some $\Phi
\leq 1$, so System~\eqref{eq:IPhiopt} with any initial condition $(I_0, \Phi_0) \in
(0,1] \times (0,1]$ satisfies the terminal condition in finite-time. 

\begin{figure*}[t!]
\centering
\begin{subfigure}[t]{.49\textwidth}
\centering
\scalebox{1.25}{
\includegraphics{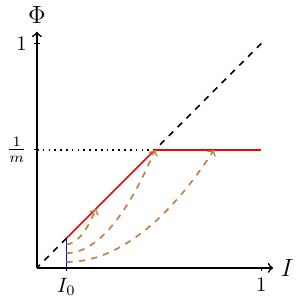}
}
\caption{\label{fig:TrajectoriesA} $c^* = 0$ region.}
\end{subfigure}
~
\begin{subfigure}[t]{.49\textwidth}
\centering
\scalebox{1.25}{
\includegraphics{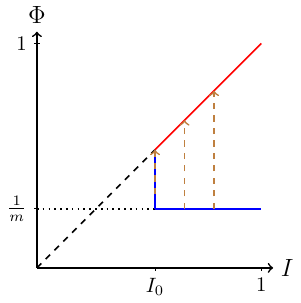}
}
\caption{\label{fig:TrajectoriesB} $c^* = \frac{1}{m}$ region.}
\end{subfigure}
\caption{ \label{fig:Trajectories}
In both graphics, the blue line is the set of potential initial points of a pure strategy trajectory, the dashed brown lines are hypothetical trajectories, and the red lines are the terminal points of a pure strategy trajectory. 
Figure~\ref{fig:TrajectoriesA} depicts $c^* = 0$ trajectories, which either terminate or transition when they
intersect  $\min{(I_0, \frac{1}{m})}$. 
Figure~\ref{fig:TrajectoriesB} depicts $c^* = \frac{1}{m}$ trajectories, which are either single-phase when 
$\Phi_0 \geq \frac{1}{m}$ or are continuations of two-phase strategies and so begin along the horizontal $\Phi = \frac{1}{m}$. 
}
\end{figure*}

The first step in our uniqueness proof will handle all $1 \geq I_0 > 0$ with $\Phi_0 < \frac{1}{m}$ in a single lemma. 
These trajectories either reach the terminal time surface first or they will intersect the transition surface,
the first case only being possible when $I_0 < \frac{1}{m}$. 
We can combine the transition and terminal surfaces into the line $\min{(I_0, \frac{1}{m})}$
and show that as $\Phi_0$ decreases with $I_0$ fixed, the time to intersect this line increases. This geometry is depicted in 
Figure~\ref{fig:TrajectoriesA}. 

\begin{lemma}
\label{lemma:SI1}
For $\Phi_0 < \min{(I_0, \frac{1}{m})}$, let $t^*$ be the time such that from $(I_0, \Phi_0)$ to $(I,I)$ when $c^* = 0$ is 
fixed, and $\tau$ as
the time at which $\Phi(\tau) = \frac{1}{m}$. Then $\min{(t^*,\tau)}$ is strictly monotone decreasing in $\Phi_0$, and thus
$\min{(t_f,\tau)}$ is strictly monotone decreasing in $\Phi_0$.
\end{lemma}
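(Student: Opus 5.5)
In the region $\Phi<\min(I_0,1/m)$ we have $c^*=0$, so up to the first hitting time of the line $\Phi=\min(I_0,\tfrac1m)$ (terminal surface $\Phi=I$ or transition surface $\Phi=\tfrac1m$) the system is explicitly integrable. With $c^*=0$, Eq.~\eqref{eq:Phisol} gives $\Phi(t)=\Phi_0e^{t}$, and $I(t)$ is given by Eq.~\eqref{eq:Ifunc}. The plan is to write both hitting times in closed form as functions of $\Phi_0$ and show that each, and hence their minimum, is strictly decreasing in $\Phi_0$.

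\textbf{Transition time.} From $\Phi_0e^{\tau}=1/m$ we get $\tau=\ln\bigl(1/(m\Phi_0)\bigr)$. This is positive and strictly decreasing in $\Phi_0$.

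\textbf{Terminal time.} The condition $\Phi(t^*)=I(t^*)$ reads
\[
\Phi_0e^{t^*}=\frac{I_0}{I_0+(1-I_0)e^{-t^*}} .
\]
Multiplying through gives $\Phi_0\bigl(I_0e^{t^*}+1-I_0\bigr)=I_0$, so
\[
e^{t^*}=\frac{I_0-\Phi_0(1-I_0)}{\Phi_0 I_0}=\frac{1}{\Phi_0}-\frac{1-I_0}{I_0}.
\]
Because $\Phi_0<I_0$, the right side exceeds $1$, so $t^*>0$ is well defined. It is also strictly decreasing in $\Phi_0$. If $I_0=1$, then $I\equiv1$ and $t^*=-\ln\Phi_0$, which is consistent with the formula.

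An alternative that avoids the explicit formula is to note that $\Phi/I$ satisfies $\tfrac{d}{dt}(\Phi/I)=(\Phi/I)\,I>0$. Hence $\Phi/I$ increases strictly from $\Phi_0/I_0<1$, the crossing of $\Phi=I$ is unique, and by uniqueness of solutions, increasing $\Phi_0$ lifts the whole trajectory and shortens the crossing time.

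\textbf{Minimum and which surface is reached.} The minimum of two strictly decreasing functions is strictly decreasing, so $\min(t^*,\tau)$ is strictly decreasing in $\Phi_0$. For the consequence about $t_f$, observe the following.
\begin{itemize}
\item If the trajectory reaches $\Phi=I$ before $\Phi=\tfrac1m$ (possible only when $I_0<\tfrac1m$, as stated), then the game ends with $c^*=0$ throughout and $t_f=t^*=\min(t^*,\tau)$.
\item Otherwise the trajectory first reaches the transition line at time $\tau$, and $\min(t_f,\tau)=\tau$.
\end{itemize}
Since $t^*$ is the terminal time whenever $t^*\le\tau$, in every case $\min(t_f,\tau)=\min(t^*,\tau)$, and the monotonicity transfers.

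The main obstacle is this case bookkeeping. One must check that along the $c^*=0$ arc, $\Phi$ cannot meet $I$ after first crossing $1/m$ in a way that invalidates the identity $\min(t_f,\tau)=\min(t^*,\tau)$. This holds because on $[0,\min(t^*,\tau)]$ the dynamics are exactly the $c^*=0$ ones. One must also handle the boundary case $t^*=\tau$ by continuity.
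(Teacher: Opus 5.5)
Your proposal is correct and follows the paper's proof: both integrate $\Phi=\Phi_0e^{t}$, read off $\tau=-\ln(m\Phi_0)$ and the closed form for $t^*$ from $\Phi(t^*)=I(t^*)$, and then conclude from the minimum of two strictly decreasing functions together with the observation that $t^*\le\tau$ forces $t_f=t^*$. Your explicit handling of the case $t^*>\tau$ and your $\Phi/I$ monotonicity remark are sound additions that the paper leaves implicit.
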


\begin{proof}
Fix $\Phi_0  < \min{(I_0, \frac{1}{m})}$. This specifies that $c^* = 0$, and we study the solutions from $(I_0,\Phi_0)$ until either $\Phi(t) = I(t)$, or $\Phi(t) = \frac{1}{m}$, whichever happens first. 

To begin, we will ignore the terminal condition and demonstrate that as $\Phi_0$ decreases, $\tau$ increases. 
When $c^* = 0$, we have for $0 \leq t \leq \tau$ that Eq.~\eqref{eq:Phisol} reduces to
\begin{equation}
\label{eq:simplephisol}
\Phi(t) = \Phi_0 e^{t}. 
\end{equation}
Coupling Eq.~\eqref{eq:simplephisol} with the definition of $\tau$, we obtain
\begin{equation}
\label{eq:Phi0Tau}
\tau = - \ln{(m \Phi_0)}.
\end{equation}
From Eq.~\eqref{eq:Phi0Tau}, we conclude that as $\Phi_0$ increases, $\tau$ decreases as desired. This also directly determines
 the duration of time spent playing $c^*$ in a two-phase strategy from the initial decision potential $\Phi_0$. 

The parallel to this result is showing $t^*$ decreases as $\Phi_0$ increases. We will show this generally for $c^*=0$.
The condition of intersection is $I(t^*) = \Phi(t^*)$, which in light of 
Eq.~\eqref{eq:Phisol} for $\Phi(t)$ and the solution for $I(t)$
given in Eq.~\eqref{eq:Ifunc},  can be written
\begin{equation}
\label{eq:Phi0tf}
\Phi_0 = \frac{ I_0}{(1-I_0)  + I_0 \exp{(t^*)}}.
\end{equation}
From this relation we see that as $\Phi_0$ decreases, $t^*$ must increase to retain equality. 

With both of these relations, we have established that as $\Phi_0$ decreases, $\min{(t^*,\tau)}$ increases. To complete the proof, observe that
$t^* \leq \tau$ implies that $t_f = t^*$, and thus as $\Phi_0$ decreases, $\min{(t_f,\tau)}$ increases.
\end{proof}

Having handled $\Phi_0 < \frac{1}{m}$ up to the transition surface, we now turn to $\Phi \geq \frac{1}{m}$. 
These solutions are either a constant strategy for the whole game, when $\Phi_0 > \frac{1}{m}$, or they are a 
continuation of a two-phase strategy from the transition surface to the terminal surface. 
The initial conditions for this case are given
by a vertical line above the transition surface and continued to the right along the transition surface, as presented in
Figure~\ref{fig:TrajectoriesB}. We will start with those initial conditions on the vertical. 

\begin{lemma}
\label{lemma:SI2}
Let $I_0 \geq \Phi_0 \geq \frac{1}{m}$. Then 
\begin{equation}
t_f = \frac{m(I_0 - \Phi_0)}{I_0},
\end{equation}
and thus $t_f$ is monotonically decreasing in $\Phi_0$. 
\end{lemma}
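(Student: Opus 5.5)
Starting from $(I_0,\Phi_0)$ with $\Phi_0 \geq \frac{1}{m}$, we first establish that the trajectory stays in the region $\Phi \geq \frac{1}{m}$ for the whole game. This follows from the monotonicity of $\Phi$ already observed: $\dot{\Phi}>0$ along solutions of System~\eqref{eq:IPhiopt}. Hence $c^* = \frac{1}{m}$ for all $t\in(0,t_f]$. If $\Phi_0 = \frac{1}{m}$ exactly, the Filippov set-valued choice applies only at the single instant $t=0$, which does not affect the integrals. At $t=0$ itself, even with $c^*\in[0,1/m]$, we have $\dot\Phi>0$, so $\Phi$ immediately exceeds $\frac{1}{m}$.

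With $c^*=\frac{1}{m}$ fixed, we have $\sigma(c^*)=(1-mc^*)^+=0$. Equation~\eqref{eq:IPhiopta} then gives $\dot I = 0$, so $I(t)\equiv I_0$. Equation~\eqref{eq:IPhioptb} reduces to $\dot\Phi = I/m = I_0/m$, or equivalently we specialize Eq.~\eqref{eq:Phisol} with $\sigma=0$, $h=0$. Either way,
\begin{equation*}
\Phi(t) = \Phi_0 + \frac{I_0}{m}\,t .
\end{equation*}

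Next we impose the terminal condition $\Phi(t_f) = I(t_f) = I_0$. This gives $\Phi_0 + I_0 t_f/m = I_0$, and solving yields $t_f = m(I_0-\Phi_0)/I_0$. The hypothesis $\Phi_0 \le I_0$ guarantees $t_f \ge 0$. Because $\Phi$ is strictly increasing while $I$ is constant, the line $\Phi=I$ is crossed exactly once, so this $t_f$ is the unique terminal time. Monotonicity is then immediate: $\partial t_f/\partial \Phi_0 = -m/I_0<0$, so $t_f$ is strictly decreasing in $\Phi_0$.

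The only step needing care is the first one: justifying that the trajectory never drops back below the transition line and that the boundary case $\Phi_0=\frac1m$ is harmless. Both are settled by the positivity of $\dot\Phi$ already established for this Filippov system. The rest is a direct computation.
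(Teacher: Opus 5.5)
Your proposal is correct and follows essentially the same route as the paper: fix $c^*=1/m$, note that $I\equiv I_0$, integrate to get $\Phi(t)=\Phi_0+I_0t/m$, and impose $\Phi(t_f)=I_0$. Your additional justification via $\dot\Phi>0$, that the trajectory never re-enters $\Phi<1/m$ and that the boundary case $\Phi_0=1/m$ is harmless, fills in a step the paper only asserts.
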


\begin{proof}
$\Phi_0 \geq \frac{1}{m}$ implies that $c^* = \frac{1}{m}$ for the duration of the game. 
In this case, $I(t)$ is constant and 
from the initial condition $(I_0, \Phi_0)$, using Eq.~\eqref{eq:Phisol} we have
\begin{equation}
\label{eq:PhiLint}
\Phi(t) = \Phi_0 + \frac{I_0}{m} t. 
\end{equation}

At the terminal time $t_f$, $\Phi(t_f) = I(t_f) = I_0$, and inserting these into Eq.~\eqref{eq:PhiLint} gives
\begin{equation}
t_f = \frac{m(I_0 - \Phi_0)}{I_0},
\end{equation}
which is clearly decreasing in $\Phi_0$. 
\end{proof}

As stated, this lemma demonstrates that $t_f(I_0,\Phi_0)$ is monotonically decreasing in $\Phi_0$.
This makes intuitive sense as the further $\Phi_0$ is away from the terminal condition, the longer it will take
to get there because trajectories overlap in this case.

The final relation to establish is between  $t_f$ and $\tau$ when
 $\Phi = \frac{1}{m}$ and $\tau \geq 0$, corresponding to $I(\tau) \in [I_0,1]$.
\begin{lemma}
\label{lemma:SI3}
Fix $I_0>0$ and suppose $0<\Phi_0 \leq \frac{1}{m}$ such that $0 \leq \tau(I_0,\Phi_0) < \infty$. 
Then as $\tau(I_0,\Phi_0)$
increases, $t_f(I_0, \tau(I_0,\Phi_0))$ is strictly monotone increasing. 
\end{lemma}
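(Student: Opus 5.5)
The plan is to compute $t_f$ explicitly as a function of the switching time $\tau$, using the explicit constant-strategy solutions from the lemma giving Eq.~\eqref{eq:Phisol} together with Eq.~\eqref{eq:Ifunc}, and then check the sign of its derivative. The trajectory has two phases. On $[0,\tau]$ we have $c^*=0$, so by Eq.~\eqref{eq:simplephisol} $\Phi(t)=\Phi_0e^{t}$ and $I(t)=I(t;I_0)$ is given by Eq.~\eqref{eq:Ifunc}. The switch happens at $\tau=-\ln(m\Phi_0)$ (Eq.~\eqref{eq:Phi0Tau}), at the state $(I(\tau),1/m)$. Since $\dot\Phi>0$ and $\Phi$ cannot return below $1/m$, the strategy is $c^*=1/m$ on $[\tau,t_f]$. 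This second phase exists only if the trajectory has not already met the terminal line $\Phi=I$, i.e.\ if $I(\tau)\ge 1/m$. When $I(\tau)<1/m$ the game ends before the switch, and that case is already covered by Lemma~\ref{lemma:SI1}. I restrict to the regime $I(\tau)\ge 1/m$, which is the one in which $t_f(I_0,\tau)$ is a two-phase quantity.

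In the second phase I would apply Lemma~\ref{lemma:SI2}, with the initial point $(I_0,\Phi_0)$ replaced by $(I(\tau),1/m)$; this is legitimate because the system is autonomous. Since $I$ is frozen, $\Phi$ grows linearly at rate $I(\tau)/m$ and reaches $I(\tau)$ after a further time $m\bigl(I(\tau)-1/m\bigr)/I(\tau)=m-1/I(\tau)$. This gives the closed form
\begin{equation*}
t_f(I_0,\tau)=\tau+m-\frac{1}{I(\tau)} .
\end{equation*}
Differentiating in $\tau$ and using $\dot I=I(1-I)$ from Eq.~\eqref{eq:IPhiopta} with $c^*=0$ gives
\begin{equation*}
\frac{d t_f}{d\tau}=1+\frac{\dot I(\tau)}{I(\tau)^2}=1+\frac{1-I(\tau)}{I(\tau)}=\frac{1}{I(\tau)}>0 .
\end{equation*}
This proves strict monotonicity; the degenerate case $I_0=1$, where $I\equiv1$ and $t_f=\tau+m-1$, is consistent with this formula.

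The remaining care is bookkeeping rather than analysis, and I expect it to be the main obstacle. First, I must justify that $\tau$ is the parameter: by Eq.~\eqref{eq:Phi0Tau}, $\tau$ is a strictly decreasing bijection of $\Phi_0\in(0,1/m]$ onto $[0,\infty)$, so ``as $\tau$ increases'' is well defined. Second, I must confirm that at the boundary $I(\tau)=1/m$ the formula gives $t_f=\tau$, matching $t^*=\tau$ from Lemma~\ref{lemma:SI1}. With that check, $t_f$ is continuous across the regime boundary, and Lemmas~\ref{lemma:SI1}--\ref{lemma:SI3} chain together into a single monotone map between $\Phi_0$ and $t_f$.
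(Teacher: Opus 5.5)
Your proposal is correct and matches the paper's proof: both freeze $I$ at $I(\tau)$ in the lock-down phase, use the linear growth of $\Phi$ to get $t_f=\tau+m-1/I(\tau)$, and differentiate to obtain $dt_f/d\tau=1/I(\tau)=1+(1-I_0)e^{-\tau}/I_0>0$. Your explicit restriction to $I(\tau)\ge 1/m$ and your boundary check $t_f=\tau$ there are bookkeeping that the paper leaves implicit in the lemma and handles in the main theorem.
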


\begin{proof}
To begin, fix $I_0$ and let $0 < \Phi_0 < \frac{1}{m}$ such that  $\tau > 0$. $I(t)$ is constant for all $t \geq \tau$, so using 
Eq.~\eqref{eq:Ifunc} we have
\begin{equation}
\label{eq:SI3intermidate1}
I(t) = I(\tau) = \frac{ \frac{I_0}{1-I_0} \exp{( \tau)}}{1 + \frac{I_0}{1-I_0} \exp{( \tau)}}.
\end{equation}

From Eq.~\eqref{eq:Phisol} for $t \geq \tau$, we have
\begin{equation}
\Phi(t) = \frac{1}{m} + \frac{I(\tau)}{m} (t - \tau)
\end{equation}

Using the terminal condition $\Phi(t_f) = I(\tau)$ and rearranging, 
\[
1 = \frac{1}{m I(\tau)} + \frac{1}{m}(t_f - \tau). 
\]
Using Eq.~\eqref{eq:SI3intermidate1}, this becomes
\begin{equation}
1 = \frac{(1 - I_0) \exp{(-\tau)} + I_0}{m I_0} + \frac{1}{m} (t_f - \tau). 
\end{equation}

Taking the derivative in $\tau$ we obtain
\begin{equation}
\frac{d t_f}{d \tau} = 1 + \frac{(1- I_0) \exp{(-\tau)}}{I_0} > 0
\end{equation}
for $\tau > 0$. This establishes that $t_f$ is strictly monotone increasing in $\tau$. 
\end{proof}

We now have everything we need to prove the main theorem of this section.

\begin{theorem}
	There is always exactly one solution to the
	necessary conditions \eqref{eq:IPhiopt}'s 
	for a subgame perfect
	Nash equilibrium of 
	the \SISDG.
\end{theorem}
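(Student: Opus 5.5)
Fix $I_0\in(0,1]$ and $m>0$. The plan is to show that the map $\Phi_0\mapsto t_f(\Phi_0;I_0)$ is a bijection from the admissible initial decision potentials $\Phi_0\in(0,I_0]$ onto $[0,\infty)$. By Lemma~\ref{lemma:divergence} and the change of variables $\Phi=I(V+1)$, every solution of the necessary conditions \eqref{eq:IPhiopt} starts at some $(I_0,\Phi_0)$ with $\Phi_0\in[0,I_0]$. Since $\dot\Phi>0$, the Filippov solution from each such point is unique and reaches the terminal line $\Phi=I$ in finite time. Hence $t_f(\Phi_0)$ is a well-defined function, and the feedback law determines $c^*$ uniquely along the trajectory. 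Existence and uniqueness of a solution for a given $t_f$ then amount to surjectivity and injectivity of this map. I will show $t_f$ is continuous and strictly decreasing in $\Phi_0$, with $t_f(I_0)=0$ and $t_f\to\infty$ as $\Phi_0\to0^+$.

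I split into two cases. In the first case, $I_0\le 1/m$. For every $\Phi_0<I_0$ the trajectory runs with $c^*=0$, and $\Phi=\Phi_0e^t$ together with $I(t)$ from Eq.~\eqref{eq:Ifunc} shows that $\Phi/I$ is strictly increasing. The trajectory therefore meets the terminal line before or exactly when it reaches $\Phi=1/m$, since $I\le$ its value there. So $t_f=t^*$ is given explicitly by Eq.~\eqref{eq:Phi0tf}, namely $t_f=\ln\bigl((I_0/\Phi_0-1+I_0)/I_0\bigr)$. This is continuous and strictly decreasing, equals $0$ at $\Phi_0=I_0$, and diverges as $\Phi_0\to0$. (If the terminal line is touched exactly at $\Phi=1/m$, the switch is irrelevant.)

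In the second case, $I_0>1/m$, and I split $\Phi_0$ at $1/m$. On $[1/m,I_0]$, Lemma~\ref{lemma:SI2} gives $t_f=m(I_0-\Phi_0)/I_0$, which decreases continuously from $m(I_0-1/m)/I_0$ down to $0$. On $(0,1/m)$, the $c^*=0$ phase cannot hit the terminal line first, because $\Phi<1/m<I_0\le I$. So the trajectory switches at $\tau=-\ln(m\Phi_0)$, by Lemma~\ref{lemma:SI1}, which is strictly decreasing in $\Phi_0$. Lemma~\ref{lemma:SI3} then shows $t_f$ is strictly increasing in $\tau$; its explicit form is $t_f=\tau+m-\bigl((1-I_0)e^{-\tau}+I_0\bigr)/I_0$. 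Composing the two monotone maps, $t_f$ is strictly decreasing in $\Phi_0$ on $(0,1/m)$.

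The step I expect to need the most care is matching the two branches of the second case at $\Phi_0=1/m$. As $\Phi_0\to1/m^-$ we have $\tau\to0$, and the formula gives $t_f\to m-1/I_0$, which equals the single-phase value $m(I_0-1/m)/I_0$. So $t_f$ is continuous across the transition. As $\Phi_0\to0$, $\tau\to\infty$ and hence $t_f\to\infty$. Piecing the branches together, $t_f$ is a continuous, strictly decreasing bijection $(0,I_0]\to[0,\infty)$. This also covers $I_0=1$, where $I$ is constant and the computations degenerate consistently. Consequently, each $t_f\ge0$ has exactly one $\Phi_0$, hence exactly one $V_0=\Phi_0/I_0-1$ and exactly one trajectory, so the necessary conditions have exactly one solution.
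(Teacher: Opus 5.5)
\textbf{There is a genuine gap in your first case.} Your second case ($I_0>1/m$) is correct and matches the paper. In the first case, however, you assert that for $I_0\le 1/m$ every $c^*=0$ trajectory meets the terminal line $\Phi=I$ no later than it reaches $\Phi=1/m$, ``since $I\le$ its value there.'' That is false whenever $m>1$. Along the $c^*=0$ phase, prevalence $I(t)=I_0e^t/(1-I_0+I_0e^t)$ is not bounded by $1/m$; it increases toward $1$. For small $\Phi_0$ the switching time $\tau=-\ln(m\Phi_0)$ is large, so $I(\tau)>1/m=\Phi(\tau)$. The trajectory therefore crosses the transition line before it terminates and becomes two-phase.

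Concretely, take the paper's parameters $m=6$, $I_0=0.02$, $t_f=6$. Your formula gives $\Phi_0\approx 0.0022$, so $\tau\approx 4.3<6$ and $I(\tau)\approx 0.61>1/6$. The actual solution is two-phase, with a lock-down of length $x^*\approx 2.87$. For $\Phi_0$ below a threshold, your map $\Phi_0\mapsto t_f$ is the wrong function. The bijection you establish therefore does not describe the solutions of \eqref{eq:IPhiopt}. Your case~1 is correct only when $m\le 1$; at $I_0=1/m$ exactly, every nontrivial trajectory is two-phase. As a minor point, you should also explicitly discard $\Phi_0=0$ (i.e.\ $V_0=-1$): there $\dot\Phi>0$ fails and the trajectory never terminates.

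\textbf{The repair} is the partition the paper uses in its case $I_0<1/m$.
\begin{itemize}
\item For $m>1$, let $\tau^*\ge 0$ solve $I(\tau^*)=1/m$, and set $\Phi_0^*=e^{-\tau^*}/m$.
\item For $\Phi_0\in[\Phi_0^*,I_0]$ you have $t^*\le\tau^*$. Hence $\Phi(t^*)=I(t^*)\le 1/m$, so the solution is single-phase and your formula $t_f=\ln\bigl(1/\Phi_0-(1-I_0)/I_0\bigr)$ is valid.
\item For $\Phi_0\in(0,\Phi_0^*)$ you have $\tau>\tau^*$, so $I(\tau)>\Phi(\tau)$ and the solution is two-phase with $t_f=\tau+m-1/I(\tau)$. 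This is strictly increasing in $\tau$ by Lemma~\ref{lemma:SI3}, and $\tau$ is strictly decreasing in $\Phi_0$.
\item The two branches agree at $\Phi_0^*$: both give $t_f=\tau^*$, because the lock-down phase there has length $m-1/I(\tau^*)=0$.
\item As $\Phi_0\to 0$, $t_f\ge\tau\to\infty$.
\end{itemize}
With this partition, the continuity-plus-strict-monotonicity argument you already use in your second case goes through in the first case as well.
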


\begin{proof}
To establish uniqueness, we need to demonstrate that for each $t_f \in [0,\infty)$, there exists a single $\Phi_0 \in (0, I_0]$ 
that satisfies $\Phi(t_f)=I(t_f)$. We see from inspection that $\dot{\Phi} > 0$. This implies
that solutions are either constant strategy single-phase, or two-phase with a single change from $0$ to $\frac{1}{m}$. 

Using the lemmas proven above, we must demonstrate that as $\Phi_0$ decreases, 
$t_f(I_0,\Phi_0)$ increases strictly monotonically. First, taking $\Phi_0 = I_0$ corresponding to the trivial solution 
$t_f = 0$, there always exists a $t_f<\infty$ with a solution. 

Following our previous discussion, there are two situations for this system based around how $I_0$ relates to $\frac{1}{m}$. 
First, when $I_0 < \frac{1}{m}$, all admissible $\Phi_0$ satisfy $\Phi_0 < \frac{1}{m}$, so $c^*(0) = 0$. 

Using Lemma~\ref{lemma:SI1}, we established $\min{(t_f(\Phi_0),\tau(\Phi_0))}$ is strictly monotone decreasing in $\Phi_0$. As described, this gives monotonicity between the $\Phi_0$ on the initial line $I_0$ to the line
$\min{\big(I_0, \frac{1}{m}\big)}$. 

To partition those $\Phi_0$ that have single-phase equilibria and those that have two-phase, 
let $\Phi_0^*$ be such that $\Phi_0^*(t_f) = \frac{1}{m}$, so $\tau = t_f$. 
Since the system is autonomous, trajectories do not cross, so this gives a partition. 
For $\Phi_0 > \Phi_0^*$, $t_f < \tau$. 
In this case, we have explicitly $\frac{d t_f}{d \Phi_0}$ < 0 from Lemma~\ref{lemma:SI1}, giving the strict monotone 
relationship we desire.
For $\Phi_0 \leq \Phi_0^*$, $\tau \leq t_f$. In this case
we have $\frac{d \tau}{d \Phi_0} < 0$, also from Lemma~\ref{lemma:SI1}. 

$\Phi_0^* \in (0, I_0]$ must exist since from Eq.~\eqref{eq:Phisol} when $c^*=0$, there is a $\Phi_0$ for each $\tau \in 
[0,\infty)$. $\Phi_0^*$ corresponds to the $\tau$ where $I(\tau) = \frac{1}{m}$, which exists due to Eq.~\eqref{eq:Ifunc}. 

To complete the proof of this case, Lemma~\ref{lemma:SI3} establishes $\frac{d t_f}{d\tau}>0$. Thus, when $\tau \geq 
0$, as $\Phi_0$ decreases, $\tau$ increases strictly monotonically, which implies that $t_f$ increases strictly 
monotonically.
Since the relation between $t_f$ and $\Phi_0$ is strictly monotone, there is a unique initial $\Phi_0$ for each terminal time $t_f$. 

The second situation, when $I_0 \geq \frac{1}{m}$, has single-phase strategies where $c^* = \frac{1}{m}$
for $\Phi_0 \geq \frac{1}{m}$.

For all $\Phi_0 \in (0,\frac{1}{m}]$, $\tau \geq 0$ and the previous argument still holds. 
Otherwise, Lemma~\ref{lemma:SI2} establishes $\frac{d t_f}{d \Phi_0} < 0$.
Since the two approaches to $t_f$ coincide at $\Phi_0 = \frac{1}{m}$, we have again established a continuous 
bijection between $\Phi_0$ and $t_f$. 

Lastly, to establish that a $\Phi_0$ exists for each $t_f \in [0, \infty)$, suppose there exists an $M$ such that 
$t_f(\Phi_0) < M$ for all $\Phi_0 \in (0,I_0]$. 
Observe that
taking $\Phi_0 = 0$ gives a trajectory which tends to $(1,0)$ as $t \to \infty$ , so it never reaches terminal condition in 
finite-time. 

For any $m>0$,  while $\Phi < \min{\big( I_0, \frac{1}{m}\big)}$ the equilibrium is constant $c^* = 0$.
Eq.~\eqref{eq:Phisol} gives the formula for $\Phi(t)$ when $c^*=0$ of
\[
\Phi(t) = e^t \Phi_0. 
\]
To show that there exists a $\Phi_0$ corresponding to a game of duration at least $M+1$ for all $M>0$, 
we can write
\[
e^{-(M+1)} \min{\Big( I_0, \frac{1}{m}\Big)} = \Phi_0. 
\]
This $\Phi_0$ gives a solution trajectory which either terminates or hits the transition surface at time $M+1$. Thus
there cannot be an upper bound, so for all $t_f \in [0,\infty)$ there exists a corresponding $\Phi_0$.
\end{proof}

While our proof does not make explicit use of the relation between $t_f$ and $\Phi_0$, instead focusing on establishing how 
one varies as the other changes which is sufficient to establish a bijective relationship, 
we can construct the relation for the three 
types of equilibria in this case. In the proofs of Lemmas~\ref{lemma:SI1},~\ref{lemma:SI2},~\ref{lemma:SI3}
we construct all the equations needed to piece together the following relations.

\begin{corollary}
	Let $\Phi_0 \in (0, I_0]$. If $c^*$ satisfying the \SISDG\xspace is single-phase, then
	\begin{align*}
	t_f(\Phi_0) &= \ln{\bigg( \frac{1}{\Phi_0 I_0} - \frac{1-I_0}{I_0} \bigg)} \text{  when } c^* = 0, \\
	t_f(\Phi_0) &= \frac{m(I_0 - \Phi_0)}{I_0} \text{  when } c^* = \frac{1}{m}.
	\end{align*}
	When $c^*$ is two-phase,
	\begin{equation*}
		t_f(\Phi_0) = m - \ln{(m \Phi_0)} - \frac{m(1-I_0)\Phi_0 + I_0}{I_0}.
	\end{equation*}
\end{corollary}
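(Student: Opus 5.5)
The plan is to read each formula off the explicit relations already obtained in the proofs of Lemmas~\ref{lemma:SI1}, \ref{lemma:SI2} and \ref{lemma:SI3}. No new analysis is needed. The main theorem tells us that any equilibrium is either single-phase or two-phase with one switch at $\tau$ where $\Phi(\tau)=1/m$, so I will handle the three cases separately and, in each, solve the terminal condition $\Phi(t_f)=I(t_f)$ for $t_f$ as a function of $\Phi_0$.

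\emph{Single-phase, $c^*=0$.} Here $\Phi_0<\min(I_0,1/m)$ and the trajectory meets $\Phi=I$ before reaching $\Phi=1/m$, so $t_f=t^*$. I start from the intersection relation \eqref{eq:Phi0tf}, which comes from $\Phi_0e^{t}=I(t)$ with $I$ given by \eqref{eq:Ifunc}, and solve it for $e^{t_f}$:
\[
e^{t_f}=\frac{1}{I_0}\Big(\frac{I_0}{\Phi_0}-(1-I_0)\Big)=\frac{1}{\Phi_0}-\frac{1-I_0}{I_0}.
\]
Taking logarithms gives the first formula. I should flag that this algebra gives $1/\Phi_0$ as the leading term inside the logarithm, not $1/(\Phi_0 I_0)$. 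I will recheck it against the limiting case $\Phi_0=I_0$, which must give $t_f=0$, and use that check to settle the correct normalization.

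\emph{Single-phase, $c^*=1/m$.} Here $\Phi_0\ge 1/m$ and $I$ is frozen at $I_0$. Lemma~\ref{lemma:SI2} already states $t_f=m(I_0-\Phi_0)/I_0$, so there is nothing further to do.

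\emph{Two-phase.} This is the only case requiring a composition. On the first phase, \eqref{eq:Phi0Tau} gives $\tau=-\ln(m\Phi_0)$, equivalently $e^{-\tau}=m\Phi_0$. The proof of Lemma~\ref{lemma:SI3} gives the second-phase relation
\[
m=\frac{(1-I_0)e^{-\tau}+I_0}{I_0}+(t_f-\tau).
\]
Solving for $t_f$ and substituting both $\tau$ and $e^{-\tau}$ yields
\[
t_f=m-\ln(m\Phi_0)-\frac{m(1-I_0)\Phi_0+I_0}{I_0},
\]
as claimed.

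The main obstacle is bookkeeping rather than analysis, namely keeping the case boundaries consistent. I will check that the formulas agree where the cases meet. At $\Phi_0=1/m$ (so $\tau=0$), the two-phase formula reduces to $m(I_0-1/m)/I_0$, matching the $c^*=1/m$ formula. At $\Phi_0=\Phi_0^*$ (so $\tau=t_f$ and $I(\tau)=1/m$), the two-phase formula should reduce to the $c^*=0$ formula. This consistency check is also where the normalization question in the $c^*=0$ formula gets resolved.
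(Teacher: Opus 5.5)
Your proposal is correct and follows the paper's own route. The paper gives no separate argument for this corollary: it combines \eqref{eq:Phi0tf}, Lemma~\ref{lemma:SI2}, and the relations $\tau=-\ln(m\Phi_0)$ and $1=\frac{1}{mI(\tau)}+\frac{t_f-\tau}{m}$ from the proof of Lemma~\ref{lemma:SI3}, as you do, and your two-phase algebra and the match at $\Phi_0=1/m$ are right.

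Your flag on the first formula is justified, and you should commit to it rather than leave it open. Solving \eqref{eq:Phi0tf} gives $t_f=\ln\bigl(\frac{1}{\Phi_0}-\frac{1-I_0}{I_0}\bigr)$. This yields $t_f=0$ at $\Phi_0=I_0$. It also agrees with the two-phase formula at $\Phi_0^*=\frac{I_0(m-1)}{m(1-I_0)}$, where both equal $\ln\frac{1-I_0}{I_0(m-1)}$. The printed term $\frac{1}{\Phi_0 I_0}$ fails both checks whenever $I_0<1$, so the statement as printed contains a typo.
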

One can verify that these are monotone, and that they agree for those $\Phi_0$ which are the boundary between single-phase 
and two-phase strategies.

Having presented our argument for uniqueness of the Nash equilibria in the \SISDG\xspace in full, it is worth offering a brief
discussion of how our approach depends on the assumptions we have made in setting up the model. A key property that
we leverage is monotonicity of $\Phi$, which makes the Nash equilibrium $c^*$ monotonically increasing, in this case
changing strategy at most once as the trajectory crosses between the two regions of constant $c^*$. 
Monotonicity of $\Phi$, as can be discerned from
the definition in Eq~\ref{eq:phidef}, is due entirely to monotonicity of $I$ when there is no future-cost discounting ($h = 0$). 
While this property is not universal in the broader
social distancing game space, for example the SIR variant does not have monotone $c^*$ \citep{bib:Reluga2013}, it is a general
property in the SI variant. Cost discounting can complicate this picture because $V$ is no longer necessarily increasing, 
but as we demonstrate in our forthcoming work on this model with $h>0$, it ultimately does not alter the monotonicity of $\Phi$ in this variant. 

The fundamental challenge of our approach to proving uniqueness is extracting sufficient information about the duration of individual
trajectories and its relationship to the cost of the game so that we can establish monotonicity. 
The regions of constant $c^*$ allow us to cut up a trajectory into segments through each region, over which we can then
integrate $I$ and $\Phi$ sufficiently to piece together the full trajectory and study its dependence on the initial conditions. 
Without regions of constant $c^*$ which are granted by the threshold-linear form of $\sigma$,
it is unclear how to examine the trajectory duration because we cannot integrate $I$ and $\Phi$ while $c^*$ is changing continuously. 
So while this part of our argument
should be generalizable to piecewise-linear $\sigma$ since these only introduce more regions of constancy that can similarly 
be glued together, it does not clearly apply to the continuously changing equilibria that arise from a strictly convex $\sigma$.
 Specific examples of strictly convex $\sigma$ might be found
where the integrals of $I$ and $\Phi$ can be computed sufficiently and thus uniqueness established through our approach, 
but the general question of uniqueness in the \SISDG\xspace can only seemingly be settled using our approach by way of
a limiting argument from piecewise-linear $\sigma$ games.

\section{Discussion}

Here we have presented the first formal proof of the uniqueness of Nash equilibria in the finite time social distancing game as 
studied by \citet{bib:Reluga2010,bib:Reluga2013} for an SI epidemic. These equilibria strategies are two-phase off-on strategies,
which reflect the linearity of $(1-mc)^+$ and the monotonicity of the risk of infection. 
The method of analysis we have deployed to establish uniqueness in this simplified case will be applied analyze to the positive 
discounting case $(h>0)$ in a forthcoming work, and our ultimate goal is to characterize the Nash equilibria structure for the 
SI social distancing game for any admissible convex $\sigma$. As we have explained above, the clear path forward is to
generalize our argument to piecewise-linear $\sigma$, which, like the threshold-linear form we have focused on, divide the 
$I,\Phi$ phase-plane into regions of constant $c^*$. While this will require wrestling with additional details, the fundamental
aspects of the proof of uniqueness are all developed above. With this case handled, it should be possible to construct a limiting
argument for games with any admissible $\sigma$ using these piecewise-linear games, and will settle the question of 
uniqueness in the SI social distancing game. 

Our proof focused on the perfect intervention case because of its mathematical simplicity, but one cannot expect that social distancing interventions are perfect in many cases. While slightly more difficult, we expect that within this approach one can prove uniqueness for imperfect interventions, where $\sigma(c) =  \max{(1 - mc, b)}$ for $b \in [0,1)$. 
In lieu of a proof, we observe that the only difference is that 
$\dot{I} > 0$ when $\Phi > \frac{1}{m}$, which should not affect the conclusions but may make the equations more difficult
to analyze. Our as-yet-unpublished results with $h >0$ do suggest that this conjecture could be provable within this framework. 

The second aspect of our analysis was investigating the efficacy of the Nash equilibria strategies, which we did by restricting
the strategy space to only on-off strategies. A key insight from this analysis is social distancing in this game is only
efficacious for epidemics which go on for some duration but a vaccine arrives before too long, dependent on the efficiency of 
the intervention. We also demonstrated that the \DSISDG\xspace does not have a free-riding effect, which corroborates a conclusion 
from \citet{bib:Chen2012}. 

Because of our ability to explicitly construct the \DSISDG\xspace equations, we were able to prove that the Nash equilibrium in the
delay strategy space is a global ESS (Appendix~\ref{secA2}, Theorem~\ref{thm:ESS}). While we would like to extend this 
characterization to the general strategy space, even a two-dimensional strategy space can make this analysis very 
difficult, as seen by \citet{bib:Li2013}. We are unaware of any successful efforts to codify the concept of an ESS for an 
infinite-dimensional strategy space or to develop the tools to characterize an ESS over such a space. This system may provide
the perfect mixture of complexity and simplicity to undertake such an effort. 

Our results here are strongly dependent on the monotonicity of risk in the SI
epidemic dynamic.  They will certainly not generalize in full to SIR theory,
where generally risk both increases and decreases throughout the epidemic.  
Any attempt to analyze even the SIR variant, like
in \citep{bib:Reluga2013}, quickly encounters difficulties that we avoided. 
An interesting generalization worth considering is replacing $I(t)$ with any 
monotonically increasing function $0 < J(t) \leq 1$. While the general structure of our
analysis should be maintained under this modification, and thus one expects to be able to 
establish the requisite monotonicity if the lack of explicit equations can be overcome, 
evidence suggests that the solution structure of the \DSISDG\xspace is not 
stable under this modification (Appendix~\ref{secA3}). While the \DSISDG\xspace and
the \SISDG\xspace have equivalent equilibria in the SI setting,
it may be that they cease to coincide in the general setting, 
or that the solution concept for the \SISDG\xspace similarly breaks down upon generalization, 
but more work needs to be done to bring this fully into frame.

Finally, while our analysis of the \SISDG\xspace is an important piece of
progress in our understanding of epidemic games, there is still much work to
do.  As mentioned in the introduction, the \SISDG\xspace as presented here
shares elements with many practical epidemic management problems but is not an
exact match to any we know.  Any attempts at application should carefully
consider departures from our assumptions and the particular epidemiological
scenario of interest.  HIV, for example, is a reasonable match to the SI
dynamic, as infected individuals are never again virus-free, but there is no
vaccine or cure that could create a finite bound on the epidemic, people can
not be assumed to know their own infection status, the time-scale of HIV spread
is so long that demographic turnover can not properly be neglected, and
widespread access to PReP has changed the incentive structure
\citep{bib:Mody2024}.  Hepatitis B can become a chronic infection that is is
incurable but is preventable with vaccination.  But Hepatitis B is typically
cryptic, becomes chronic in only 10 percent of cases, and is only highly
transmissible at birth (not SI-like) or in certain drug-using communities
(which by nature have perverse incentive structures)
\citep{bib:MacLachlan2015}.  There are disease like measles, mumps, and
smallpox, for which people do know their own infection status and mass
vaccination is a traditional outbreak-response, but these are all SIR-like
infections, where the window of infectivity is often much shorter than the
outbreak duration \citep{bib:Hall2021}.  The recovery of infected individuals
leads to an eventual decline in infection risk, which breaks the monotonicity
assumption fundamental to our current analysis.  Computational results do
exist for SIR social-distancing games \citep{bib:Reluga2010},
but the establishment of a general synthetic proof of uniqueness remains an
open problem.

\section*{Acknowledgments}

We thank two anonymous reviewers from their constructive comments, particularly
with respect to our free-riding free result and clarification of the equilibrium
structure conditions.

\section*{Declarations}

\bmhead{Funding} C. Olson was supported by the National Science Foundation Graduate Research Fellowship Program under Grant No. DGE1255832 while producing this work. Any opinions, findings, and conclusions or recommendations expressed in this material are those of the author and do not necessarily reflect the views of the National Science Foundation.

\bmhead{Conflicts of Interest} The authors have no relevant conflicts of interest to declare. 

\bmhead{Data Availability} Data generated during the production of this work is available upon reasonable request to the contributing author.

\begin{appendices}

\section{Minima of Special Functions}
\label{secA1}

For our proof of the ESS conditions on the delay strategy space, we
encounter two functions for which the Nash strategy is a global minimum to
each, and this allows us to establish the necessary inequalities. Here, we
present proofs of both facts. 

\begin{lemma}
	\label{lemma:firstmin}
	Consider $F: [0,t_f] \rightarrow \RR$ such that
	\begin{equation}
		F(x) = \exp(cx)\Big(\frac{x}{m} -1 \Big)
	\end{equation}
	where $c > 0$ and $m>0$.
	If $x^* := m - 1/c \in [0,t_f]$, then $x^*$ is the global minimizer of $F$.
\end{lemma}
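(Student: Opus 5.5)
We prove this with a one-variable sign analysis of $F'$. Differentiating gives
\[
F'(x) = e^{cx}\Big(\frac{c x}{m} - c + \frac{1}{m}\Big) = \frac{c}{m}\, e^{cx}\big(x - (m - 1/c)\big) = \frac{c}{m}\, e^{cx}\,(x - x^*).
\]
Since $c>0$, $m>0$ and $e^{cx}>0$, the sign of $F'(x)$ equals the sign of $x - x^*$. So $F$ is strictly decreasing on $[0,x^*]$ and strictly increasing on $[x^*,t_f]$.

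Because $x^* \in [0,t_f]$ by hypothesis, this unimodality shows that $x^*$ is the unique global minimizer of $F$ on $[0,t_f]$. Formally, for any $x\in[0,t_f]$ with $x\neq x^*$, the mean value theorem on the interval between $x$ and $x^*$ gives $F(x) > F(x^*)$.

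The endpoint cases need one remark. If $x^*=0$, the decreasing part is empty; if $x^*=t_f$, the increasing part is empty. In both cases the same argument applies unchanged.

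It is also useful to record the minimal value $F(x^*) = -e^{cm-1}/(cm)$ for later use in the ESS inequalities. The only step requiring care is factoring $F'$ so that the zero appears exactly at $m-1/c$; beyond that the argument presents no real obstacle.
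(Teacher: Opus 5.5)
Your proof is correct, but it takes a more direct route than the paper's. The paper never differentiates $F$. It rewrites $F(x)=\frac{e^{cm}}{cm}\,L(c(x-m))$ with $L(y)=ye^{y}$, and appeals to the well-known fact (via the Lambert $\operatorname{W}$-function) that $L$ has its unique global minimum at $y=-1$; this pulls back to $x=m-1/c$. That is shorter, but it outsources the key fact.

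Your factorization $F'(x)=\frac{c}{m}e^{cx}(x-x^*)$ is in effect the proof of that fact, carried out in the original variable. So your argument is self-contained and gives strictness explicitly, which matters because Theorem~\ref{thm:ESS} asserts strict inequalities.

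It also gives something the lemma's statement does not record: $F$ is strictly decreasing to the left of $m-1/c$ and strictly increasing to the right, on all of $\mathbb{R}$, whether or not $m-1/c$ lies in $[0,t_f]$. This matters in Theorem~\ref{thm:ESS}:
\begin{itemize}
\item In the Nash step, the lemma applies verbatim with $c=I(t_f-x^*)$, because then $m-1/c=x^*$ by \eqref{eq:Imx}.
\item In the invasion step (case $x^*<y$), the lemma is invoked with $c=I(t_f-y)$. Its critical point $m-1/I(t_f-y)$ lies strictly to the left of the Nash $x^*$, since $I(t_f-y)<I(t_f-x^*)=1/(m-x^*)$. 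So ``$x^*$ is the global minimizer'' is not literally true there. What gives the needed $F(y)>F(x^*)$ is exactly your monotonicity on $[m-1/c,\infty)\supset[x^*,y]$.
\end{itemize}
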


\begin{proof}
	Observe that if $L(y) = y e^y$, then $F(x) = e^{cm}/(cm) L((x-m)c)$.
	Now $L(y)$ is a partial inverse of the Lambert-Corless $\operatorname{W}$-function \citep{bib:Corless96},
	and is well known to have a global minimum at $y^*:=-1$.  Thus,
	$F(x)$ must have a global minimum at $x^*=m-1/c$, provided $x^* \in [0,t_f]$.
\end{proof}

\begin{lemma}
	\label{lemma:secondmin}
	Let $I(t)$ be as given in Eq.~\eqref{eq:Ifunc}. Suppose there exists an $x \in (0, t_f)$ with $I(t_f - x) = \frac{1}{m-x}$, then $x$ is a global minimum of 
	\begin{equation}
		(1 - I(t_f - x))\Big(\frac{x}{m}-1 \Big).
	\end{equation}
\end{lemma}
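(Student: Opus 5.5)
The plan is to differentiate the function directly and show that its derivative changes sign exactly once, from negative to positive, at the point where $I(t_f-x)=\frac{1}{m-x}$. Write
\[
G(x) := \bigl(1 - I(t_f - x)\bigr)\Bigl(\frac{x}{m}-1\Bigr), \qquad x\in[0,t_f].
\]
Throughout we use the logistic identity $\frac{d}{du}I(u) = I(u)\bigl(1-I(u)\bigr)$, which follows from Eq.~\eqref{eq:Ifunc}.

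First I will dispose of the degenerate case $I_0=1$. There $I\equiv 1$ and $G\equiv 0$, so every point is a global minimizer and the claim holds trivially. Assume from now on that $0<I_0<1$, so that $0<I(u)<1$ for all $u$. By the chain rule, $\frac{d}{dx}\bigl(1-I(t_f-x)\bigr) = I(1-I)$, with $I$ evaluated at $t_f-x$. Hence
\[
G'(x) = I(1-I)\Bigl(\frac{x}{m}-1\Bigr) + \frac{1-I}{m} = \frac{1-I(t_f-x)}{m}\Bigl[\,1 - I(t_f-x)\,(m-x)\Bigr].
\]
Since $\frac{1-I}{m}>0$, the sign of $G'$ equals the sign of $h(x):=1-P(x)$, where $P(x):=I(t_f-x)(m-x)$. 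A critical point is exactly a solution of $P(x)=1$, which is the hypothesis $I(t_f-x)=\frac{1}{m-x}$.

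The key step is to show that $P$ crosses $1$ only once, and downward, on $[0,t_f]$. Let $\hat x$ be the given point, so $P(\hat x)=1$; note this forces $m-\hat x=1/I(t_f-\hat x)>0$. The factor $I(t_f-x)$ is positive and strictly decreasing in $x$, because $I$ is strictly increasing when $0<I_0<1$.

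For $x<\hat x$, both factors of $P$ are positive and strictly larger than at $\hat x$, so $P(x)>1$ and $G'(x)<0$.

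For $x>\hat x$ there are two subcases. If $m-x>0$, both factors are positive and strictly smaller than at $\hat x$, so $P(x)<1$. If $m-x\le 0$, then $P(x)\le 0<1$. Either way $G'(x)>0$.

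Therefore $G$ is strictly decreasing on $[0,\hat x]$ and strictly increasing on $[\hat x,t_f]$, so $\hat x$ is the unique global minimizer on $[0,t_f]$.

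The only delicate point I expect is the sign bookkeeping for $m-x$: the monotonicity argument for the product $P$ needs positive factors. This is handled by the observation that $m-\hat x>0$ at the critical point, together with the trivial bound $P\le 0$ wherever $m-x\le 0$. Continuity of $G$ on the closed interval comes from Eq.~\eqref{eq:Ifunc}, and differentiability inside it is immediate.
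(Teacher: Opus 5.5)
Your proof is correct, and it takes a different and more complete route than the paper's. The paper computes the same first derivative and identifies the critical point through $I(t_f-x)=\frac{1}{m-x}$. It then evaluates the second derivative there, obtaining a quantity proportional to $I(1-I)\bigl(1-\frac{x}{m}\bigr)$, which is positive because $I<1$ forces $x<m$. Strictly speaking, this certifies only a strict local minimum. The step to ``global'' is asserted rather than argued: the paper addresses neither uniqueness of the critical point nor how the function compares at the endpoints of $[0,t_f]$.

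You instead factor $G'(x)=\frac{1-I}{m}\bigl[1-P(x)\bigr]$ and note that $P(x)=I(t_f-x)(m-x)$ is strictly decreasing wherever $m-x>0$ and nonpositive elsewhere. This gives the full sign pattern of $G'$ on $[0,t_f]$. So $\hat x$ is the only critical point and the strict, unique global minimizer. That strictness is exactly what Theorem~\ref{thm:ESS} uses when it concludes strict inequalities in the cases $x^*<y$ (Nash condition) and $x^*>y$ (invasion condition).

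The paper's route is shorter and yields the curvature at the minimum, $G''(\hat x)=\frac{1-I}{m}$. Yours costs one extra monotonicity observation but actually delivers the global, strict statement. The case $I_0=0$, which you skip, is vacuous: then $I\equiv 0$ can never equal $\frac{1}{m-x}$.
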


\begin{proof}
	The critical point condition is 
	\begin{equation}
		0=(1- I(t_f-x))\Big(\frac{1}{m} + I(t_f-x)\Big(\frac{x}{m} - 1\Big)\Big),
	\end{equation}
	which is satisfied when $x$ satisfies $I(t_f - x) = \frac{1}{m-x}$. 
	The second derivative evaluated at the $x$ satisfying the condition is 
	\begin{equation}
		\frac{(1 - I(t_f - x)I(t_f-x))}{1-I_0}\Big(1 - \frac{x}{m}\Big)>0.
	\end{equation}
	because $\frac{1}{m-x} = I < 1$, which implies that $\frac{x}{m} < 1$. 
\end{proof}

\section{Differentiability of the Delay Disutility}\label{secA3}

In order to apply first-order derivative conditions in establishing our equilibrium results
in the \DSISDG, we must know that the delayed disutility is continuous and differentiable.
Special care must be taken because of the disutility is expressed
in terms of Max functions, which have cusps generally.
\begin{theorem} \label{thm:DCD}
	For the \DSISDG, $\D(x,\ox)$ is both continuous and differentiable over $[0,t_f]\times [0,t_f]$
	for all initial conditions $I_0 \in [0,1]$.
\end{theorem}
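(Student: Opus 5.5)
The plan is to write $\D(x,\ox)$ piecewise on the two triangles $\{x\ge \ox\}$ and $\{x\le \ox\}$ of the square $[0,t_f]^2$. On each closed triangle the expression is a composition of smooth functions. The cusp coming from $s=t_f-\max(x,\ox)$ lies only along the diagonal $x=\ox$. So the work reduces to checking that the two branches agree to first order there.

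For $I_0<1$ we use Eq.~\eqref{eq:DSISDGc} directly.

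\emph{Branch $x\ge\ox$.} Here $s=t_f-x$ and the exponent $(s-t_f+x)I(s)$ vanishes. Hence
\[
q=\frac{1-I(t_f-x)}{1-I_0},
\]
which does not depend on $\ox$. This branch is a smooth function of $x$ alone on its triangle.

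\emph{Branch $x\le\ox$.} Here $s=t_f-\ox$ and
\[
q=\frac{1-I(t_f-\ox)}{1-I_0}\,\exp\big[(x-\ox)\,I(t_f-\ox)\big],
\]
which is smooth in both variables. Recall that $I(u)$ from Eq.~\eqref{eq:Ifunc} is smooth and bounded away from $1$ for finite $u$ when $I_0<1$, with $I(u)\in(0,1)$ for $0<I_0<1$.

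\emph{Continuity.} On the diagonal both formulas give $q=(1-I(t_f-x))/(1-I_0)$. Since $\D=1-q(1-x/m)$ and each branch is continuous on its closed triangle, the pasted function is continuous on the square.

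\emph{Matching partial derivatives.} Differentiability across the diagonal then follows from the standard gluing fact: a function that is $C^1$ on two closed pieces sharing a common boundary, with matching gradients on that boundary, is $C^1$ on the union. So it suffices to compute both one-sided partial derivatives of $q$ at $x=\ox$.

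In the $x$-direction, the first branch gives $\partial_x q = I'(t_f-x)/(1-I_0)$. The second branch gives $\partial_x q = q\,I(t_f-\ox)$, which on the diagonal equals $(1-I)I/(1-I_0)$. These agree because $I'(u)=I(u)(1-I(u))$, the logistic equation satisfied by Eq.~\eqref{eq:Ifunc}.

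In the $\ox$-direction, the first branch gives $0$. For the second branch, differentiate in $\ox$ and then set $x=\ox$. This yields
\[
\frac{I'(t_f-\ox)}{1-I_0}+\frac{1-I}{1-I_0}\big(-I+0\big),
\]
since the derivative of $(x-\ox)I(t_f-\ox)$ at $x=\ox$ is $-I$. By the logistic identity this is again $0$.

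So the gradients coincide on the diagonal and $\D$ is $C^1$ on the whole square, including the boundary edges, where we use one-sided derivatives.

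\emph{Degenerate endpoints.} For $I_0=1$ the formula $1-e^{x-t_f}(1-x/m)$ is smooth and independent of $\ox$. For $I_0=0$ we have $I\equiv0$, so $\D=x/m$, which is smooth. To be safe, I would also note that the $I_0<1$ formula is continuous as $I_0\to1$, although this is not required.

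The main obstacle is the diagonal matching. In particular, the vanishing of $\partial_{\ox}$ on the second branch relies entirely on the identity $I'=I(1-I)$; the rest of the argument is routine. I would also double-check the corner points $(0,0)$ and $(t_f,t_f)$, where $s\in\{t_f,0\}$, to confirm that no boundary effect occurs. None should, since $I$ is smooth on $[0,t_f]$.
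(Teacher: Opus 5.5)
Your proposal is correct and follows the same route as the paper. Both split $[0,t_f]^2$ along $x=\ox$, compute the gradient of $q$ on each branch, and check that the one-sided gradients agree on the diagonal. Your explicit use of the logistic identity $I'=I(1-I)$ and of the $C^1$ gluing fact makes the paper's ``unique tangent plane'' step precise, and your separate $I_0=0$ case is already covered by the $I_0<1$ formula.
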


Before proceeding with the proof,
we caution the reader that Theorem~\ref{thm:DCD} {\bf is not generic}.  While $\D(x,\ox)$
is generally continuous, its globally differentiability is a structurally unstable
accident of the mass-action SI hypothesis and the stiffness of the delay strategy space!  While we conjecture
that our macroscopic results hold very generally, small
structural perturbations can break the differentiability of $\D(x,\ox)$ and lead
to very different equilibrium microstructure, including loss of classical existence and
uniqueness!  Full consideration is beyond our current scope and must be left for future research.
\citet{bib:Bouveret2020} provide a starting point for further consideration in terms
of relaxed equilibria concepts.

\begin{proof}
In our construction of the delay-strategy game,
the strategies $\zeta(t;x)$ have discontinuous jumps in value when $t = t_f - x$.
Since $\D(x,\ox) = D(\zeta(t;x),\zeta(t,\ox))$, we need to consider variation of both $x$ and $\ox$.  
\DSISDG\xspace says $\D(x,\ox) = 1 - q(x,\ox) (1-x/m)$,
with a special limiting case when $I_0 = 1$.
Since both $\max(x,\ox)$ and $(x-\ox)^+$ are continuous functions, 
the continuity of $\D$ in $x$ and $\ox$ follows the definition of $q(x,\ox)$ given
in Eq.~\eqref{eq:DSISDGc} and the composition of continuous functions.

To demonstrate differentiability of Eq.~\eqref{eq:DSISDGb}, we have the cases of
$I_0=1$ and $I_0 < 1$. The case of $I_0=1$ is trivial.  For the other case of $I_0<1$,
it suffices (by inspection) to show that $q(x,\ox)$ is differentiable, in the sense that
there is a unique tangent plane at each strategy pair $(x,\ox)$.  
In neighborhoods of $x=\ox$, the  mild irregularity of the Max function
appears, and we must consider the specifics of the surface. When $x < \ox$, 
\begin{subequations}
	\label{eq:qgradA}
	\begin{align}
		\frac{d q}{d x} &= \frac{I(t_f - \ox)\Big( 1 - I(t_f - \ox) \Big)}{1-I_0} \exp{\Big(-I(t_f - \ox)(\ox-x) \Big)},
		\\
		\frac{d q}{d \ox} &= \frac{I(t_f - \ox)\Big(1 - I(t_f - \ox) \Big) (\ox - x)}{1 - I_0} \exp{\Big(-I(t_f - \ox) (\ox-x) \Big)}.
	\end{align}
\end{subequations} 
When $x > \ox$,
\begin{subequations}
	\label{eq:qgradB}
	\begin{align}
		\frac{d q}{d x} &= \frac{I(t_f - x)\Big(1 - I(t_f - x) \Big)}{1-I_0},
		\\
		\frac{d q}{d \ox} &= 0.
	\end{align}
\end{subequations}
Taking the one-sided limits as $x \to \ox$ from below, we observe that the
derivatives \eqref{eq:qgradA} converges to the derivatives
\eqref{eq:qgradB}. Since the derivatives of these planes exist in
neighborhoods of $x=\ox$ and are equal in the limit of $x = \ox$, then we
must have a uniquely defined tangent plane along $x=\ox$. Away from the
line $x=\ox$, standard results imply existence and uniqueness. Thus,
$\D(x,\ox)$ is everywhere differentiable.
\end{proof}



An immediate consequence of these calculations is the following.
\begin{theorem}
	The \DSISDG\xspace is free from free-riding.
\end{theorem}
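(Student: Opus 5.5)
The claim is that the Nash equilibrium $x^*$ of the \DSISDG\ coincides with the minimizer of the emblematic disutility $\mathcal{E}(\ox)=\D(\ox,\ox)$. The plan is to compare the first-order conditions of the two problems, using the gradient formulas just computed in the proof of Theorem~\ref{thm:DCD}.

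\textbf{Step 1 (total derivative on the diagonal).} Because $\D$ is differentiable on $[0,t_f]^2$ (Theorem~\ref{thm:DCD}), the chain rule gives
\[
\frac{d\mathcal{E}}{d\ox}(\ox) = \partial_x \D(\ox,\ox) + \partial_{\ox}\D(\ox,\ox).
\]
Since $\D = 1 - q(1-x/m)$, we have $\partial_{\ox}\D = -(1-x/m)\,\partial_{\ox} q$. Taking $x\to\ox$ in either \eqref{eq:qgradA} (the factor $\ox-x$ vanishes) or \eqref{eq:qgradB} (where $\partial_{\ox}q=0$) gives $\partial_{\ox}q(\ox,\ox)=0$. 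Hence the externality term vanishes on the diagonal, and
\[
\mathcal{E}'(\ox) = \partial_x\D(x,\ox)\big|_{x=\ox}.
\]
This identity is the heart of the ``no free-riding'' statement: marginally changing the population strategy has no first-order effect on an individual who is already playing it.

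\textbf{Step 2 (same critical points).} By the identity above, interior critical points of $\mathcal{E}$ are exactly the $\ox$ at which $x=\ox$ satisfies the individual first-order condition. The boundary cases carry over as well: $\mathcal{E}'(0)\ge 0$ if and only if $\partial_x\D(0,0)\ge0$, and similarly at $t_f$. Explicitly, $\partial_x\D(\ox,\ox)$ is a positive factor $(1-I)/(1-I_0)$ times $\tfrac1m - I(t_f-\ox)(1-\ox/m)$. Its vanishing is the condition $I(t_f-x)=1/(m-x)$, which matches the transcendental equation defining $x^*$ in \eqref{eq:DSISDG:nash}.

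\textbf{Step 3 (global minimum).} One could show $\mathcal{E}$ is minimized at $x^*$ by differentiating the explicit form \eqref{eq:emblem}. Instead I would invoke Lemma~\ref{lemma:secondmin}: along the diagonal, $\mathcal{E}(\ox) = 1 + \frac{1-I(t_f-\ox)}{1-I_0}\,(\ox/m-1)$, so $\mathcal{E}$ is an affine transformation, with positive slope, of exactly the function treated in that lemma. This yields a global minimum at the interior $x^*$. For the corner cases $x^*\in\{0,t_f\}$, I would check the sign of $\tfrac1m - I(t_f-\ox)(1-\ox/m)$ on $[0,t_f]$. It should change sign at most once, from negative to positive as $\ox$ increases, which is the monotonicity already used to derive the threshold conditions in \eqref{eq:DSISDG:nash}. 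The case $I_0=1$ reduces to Lemma~\ref{lemma:firstmin} with $c=1$.

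\textbf{Main obstacle.} The hard part is making ``at most one sign change'' rigorous, so that boundary minimizers of $\mathcal{E}$ really coincide with the boundary Nash cases. I would handle it by showing that $g(\ox)=I(t_f-\ox)(m-\ox)$ is strictly decreasing, since both of its factors decrease in $\ox$.
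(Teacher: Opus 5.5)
Your proposal is correct and follows essentially the paper's route. You apply the chain rule on the diagonal, use $\partial_{\ox}q(\ox,\ox)=0$ from \eqref{eq:qgradA}--\eqref{eq:qgradB} to get $\mathcal{E}'(\ox)=\partial_x\D(x,\ox)|_{x=\ox}$, and then identify the minimizer with the unique Nash equilibrium; your Step~3 spells out what the paper compresses into ``the geometry of Eq.~\eqref{eq:emblem}''. Usefully, it also turns the second-derivative check in Lemma~\ref{lemma:secondmin} into a genuinely global statement.

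One small correction: $g(\ox)=I(t_f-\ox)(m-\ox)$ is strictly decreasing only on $[0,\min(m,t_f)]$, because for $\ox>m$ the factor $m-\ox$ is negative and $g$ can increase. There, however, $g\le 0<1$, so $\mathcal{E}'>0$ and your single sign change from negative to positive still holds.
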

\begin{proof}
	The rate of change of the emblematic disutility is
	\begin{equation}
		\frac{d \mathcal{E}}{d\ox} =
		\left. \left( \frac{d\D}{dx} + \frac{d\D}{d\ox} \right) \right|_{x = \ox }.
	\end{equation}
	Since, by our preceding argument, $x=\ox$ implies ${d \D}/{d \ox} = {d q}/{d \ox} = 0$,
	and ${d \mathcal{E}}/{d\ox} = {d \D}/{dx}$.
	Thus, strategies satisfying the necessary conditions for the social optima
	also satisfy the necessary conditions for a Nash equilibrium.  The equivalence
	then follows from the geometry of Eq.~\eqref{eq:emblem} and the uniqueness of
	the Nash equilibrium.  Since the socially optimal strategy is also
	the unique Nash equilibrium, there can be no free-riding.
\end{proof}

\section{ESS Proof}\label{secA2}

In this appendix we prove that the Nash equilibrium for the \DSISDG\xspace is a global Nash equilibrium and in fact an ESS \citep{bib:MaynardSmith1974}. To do this, we lean upon a relation established through the local necessary
condition for optimality.  If $\D(x,\ox)$ is continuous with
a continuous derivative, and $x^*$ is a best-response to itself, then
\begin{equation}
	0 = \left. \frac{d\D}{dx} \right|_{x = x^*, \ox=x^* }. 
\end{equation}
This is equivalent to $x^*$ satisfying
\begin{equation}
	0 = \frac{1 - I(t_f -x^*)}{m(1-I_0)} \Big((m-x^*)I(t_f - x) - 1 \Big).
\end{equation}
After simplifying, we find
\begin{equation}
	I(t_f - x^*) = \frac{1}{m-x^*}. \label{eq:Imx}
\end{equation}

To establish that the local Nash equilibrium given by $x^*$ satisfying the local condition Eq.~\eqref{eq:Imx} is an ESS, 
we prove that it is a global Nash equilibrium with invasion potential. We establish these through the two inequalities given 
in the following theorem. 

\begin{theorem}
	\label{thm:ESS}
	Let $x^* \in (0,t_f)$ satisfy Eq.~\eqref{eq:Imx}. Then for all $y \in [0,t_f]$, $y \not = x^*$, 
	\begin{subequations}
		\begin{align} 
			\D(x^*,x^*) &< \D(y,x^*), \label{eq:NashCondition} \\
			\D(x^*, y) &< \D(y,y). \label{eq:InvCond}
		\end{align}
	\end{subequations}
\end{theorem}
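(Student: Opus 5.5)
The plan is to use the closed form \eqref{eq:DSISDGb}, $\D(x,\ox)=1-q(x,\ox)(1-x/m)$ (I take $I_0<1$; the case $I_0=1$ is the same computation with $q=e^{x-t_f}$). Each inequality is split according to which of the two arguments of $q$ is larger, because the exponent in \eqref{eq:DSISDGc} is $(x-\max(x,\ox))I(s)$. When $x\ge\ox$ the exponential factor is $1$, so
\[
\D(x,\ox)=1+\frac{1-I(t_f-x)}{1-I_0}\Big(\frac{x}{m}-1\Big).
\]
When $x<\ox$, write $c:=I(t_f-\ox)$. Then
\[
\D(x,\ox)=1+\frac{1-c}{1-I_0}\,e^{-c\ox}\,e^{cx}\Big(\frac{x}{m}-1\Big).
\]
In each case, comparing values of $\D$ reduces to comparing values of one of the two functions in Appendix~\ref{secA1}: $h(x):=(1-I(t_f-x))(x/m-1)$ from Lemma~\ref{lemma:secondmin}, or $F(x)=e^{cx}(x/m-1)$ from Lemma~\ref{lemma:firstmin}.

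For the Nash inequality \eqref{eq:NashCondition}, fix $\ox=x^*$. If $y>x^*$, both $\D(x^*,x^*)$ and $\D(y,x^*)$ are given by the first formula, so the claim is $h(x^*)<h(y)$. If $y<x^*$, both are given by the second formula with $c=I(t_f-x^*)=1/(m-x^*)$ by \eqref{eq:Imx}. Hence $m-1/c=x^*$, and Lemma~\ref{lemma:firstmin} gives $F(x^*)<F(y)$. Strictness holds because $L(y)=ye^y$ has a unique minimizer. The positive prefactor $(1-c)e^{-cx^*}/(1-I_0)$ preserves the inequality.

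For the invasion inequality \eqref{eq:InvCond}, vary the population strategy $y$. If $y<x^*$, then $\D(x^*,y)$ uses the first formula because $x^*>y$, and $\D(y,y)$ always does, so the claim is again $h(x^*)<h(y)$. If $y>x^*$, put $c:=I(t_f-y)$. Both sides share the prefactor $(1-c)e^{-cy}/(1-I_0)$, so it suffices to show $F(x^*)<F(y)$. Now $F$ has its minimum at $x_c=m-1/c$, and it is strictly increasing on $[x_c,\infty)$ since $L$ is increasing on $(-1,\infty)$. Because $I$ is increasing and $t_f-y<t_f-x^*$, we have $c<I(t_f-x^*)=1/(m-x^*)$. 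This gives $x_c<x^*<y$, so $F(x^*)<F(y)$.

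The main obstacle is that Lemma~\ref{lemma:secondmin} as proved only checks a critical point and a positive second derivative, which gives a local minimum. Before it can be used in both cases above, I must upgrade it to a strict global minimum of $h$ on $[0,t_f]$. The plan is to use the derivative computed there,
\[
h'(x)=(1-I)\Big(\tfrac1m-\tfrac{(m-x)I(t_f-x)}{m}\Big)\quad\text{with }I=I(t_f-x),
\]
so the sign of $h'$ is the sign of $1-g(x)$, where $g(x):=(m-x)I(t_f-x)$.
\begin{itemize}
\item For $x<m$, using $\tfrac{d}{du}I(u)=I(1-I)$, one gets $g'(x)=-I-(m-x)I(1-I)<0$. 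So $g$ is strictly decreasing there and equals $1$ only at $x^*$.
\item For $x\ge m$, $g\le 0<1$.
\end{itemize}
Hence $h$ is strictly decreasing on $[0,x^*]$ and strictly increasing on $[x^*,t_f]$, so $x^*$ is the unique global minimizer. This completes both strict inequalities.
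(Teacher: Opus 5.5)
Your proof is correct and follows the paper's route: the same case split on which argument of $q$ is larger, with each comparison reduced to either $F(x)=e^{cx}(x/m-1)$ from Lemma~\ref{lemma:firstmin} or $(1-I(t_f-x))(x/m-1)$ from Lemma~\ref{lemma:secondmin}. Your two additions repair gaps in the paper's proof: in the invasion case $y>x^*$ the paper claims Lemma~\ref{lemma:firstmin} with $C=I(t_f-y)$ makes $x^*$ the minimizer, but the minimizer is $m-1/I(t_f-y)<x^*$, so your monotonicity argument on $[m-1/c,\infty)$ is what is actually needed, and your sign analysis of $1-(m-x)I(t_f-x)$ supplies the strict global minimality that Lemma~\ref{lemma:secondmin} asserts but only checks locally via the second derivative.
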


Eq.~\eqref{eq:NashCondition} is the Nash equilibrium condition and Eq.~\eqref{eq:InvCond} is the invasion condition. Together these 
establish $x^*$ as an ESS. 

\begin{proof}
	To establish Eq.~\eqref{eq:NashCondition}, we analyze
	\begin{equation}
		\D(y,x^*) - \D(x^*,x^*) = q(y,x^*)\Big(\frac{y}{m} - 1\Big) - q(x^*,x^*)\Big(\frac{x^*}{m} - 1\Big).
	\end{equation}
	To establish the desired inequality, we first presume $y < x^*$. In this case, 
	\begin{gather}
		\D(y,x^*) - \D(x^*,x^*)
		= \frac{1-I(t_f-x^*)}{1-I_0}
		\Big( \exp{(I(t_f-x^*)(y-x^*))}\Big(\frac{y}{m}-1\Big) - \frac{x^*}{m}+1 \Big).
	\end{gather}
	Taking $C= I(t_f - x^*)$ and applying Lemma~\ref{lemma:firstmin}, we see that $x^*$ is the global minimizer, so the previous difference is positive. Therefore,
	\begin{equation}
			\D(y,x^*) - \D(x^*,x^*) >0. 
	\end{equation}
	
	In the second case of $x^* < y$, we have
	\begin{equation}
			\D(y,x^*) - \D(x^*,x^*) = \frac{1-I(t_f-y)}{1-I_0} \Big(\frac{y}{m} - 1\Big) - \frac{1-I(t_f-x^*)}{1-I_0} \Big(\frac{x^*}{m} - 1\Big).
	\end{equation}
	Applying Lemma~\ref{lemma:secondmin}, $x^*$ is the global minimum, so we have
	\begin{equation}
			\D(y,x^*) - \D(x^*,x^*) >0,
	\end{equation} 
	which establishes the Nash equilibrium condition. 
	
	To prove the invasion condition Eq.~\eqref{eq:InvCond}, we similarly analyze
	\begin{equation}
		\D(y,y) - \D(x^*,y) = q(y,y)\Big(\frac{y}{m}-1\Big) -  q(x^*,y)\Big(\frac{x^*}{m}-1\Big).
	\end{equation}
	Again, we separate this into cases. If $x^*<y$, 
	\begin{equation}
		\D(y,y) - \D(x^*,y) = \frac{1 - I(t_f - y)}{1-I_0} \Big(\frac{y}{m}-1 - \exp{(I(t_f-y)(x^*-y))}\Big(\frac{x^*}{m}-1\Big)\Big).
	\end{equation}
	Applying Lemma~\ref{lemma:firstmin} with $C = (I(t_f-y)$ demonstrates that $x^*$ is the global minimum, and therefore
	\begin{equation}
			\D(y,y) - \D(x^*,y)>0.
	\end{equation}
	Lastly, when $x^*>y$, 
	\begin{equation}
			\D(y,y) - \D(x^*,y) = \frac{1 - I(t_f-y)}{1-I_0} \Big(\frac{y}{m}-1\Big) - \frac{1 - I(t_f-x^*)}{1-I_0} \Big(\frac{x^*}{m}-1\Big).
	\end{equation} 
	Applying Lemma~\ref{lemma:secondmin}, we obtain
	\begin{equation}
		\D(y,y) - \D(x^*,y)>0,
	\end{equation}
	which establishes the invasion condition. 
\end{proof}

\end{appendices}

\bibliography{Biblio-Database}

\end{document}